\documentclass[aps,twocolumn,pra,superscriptaddress]{revtex4-1}

\usepackage[toc,page]{appendix}
\usepackage{graphicx}
\usepackage{lipsum}
\usepackage{amsmath}
\usepackage{amsfonts}
\usepackage{amsthm}
\usepackage{mathtools}
\usepackage{amssymb}
\usepackage{multirow}
\usepackage{array}
\usepackage{bm}
\usepackage{natbib}
\usepackage{diagbox}

\newtheorem{theorem}{Theorem}

\newtheorem{definition}{Definition}
\newtheorem{lemma}{Lemma}
\newtheorem{corollary}{Corollary}
\newtheorem{remark}{Remark}

\newcommand{\RD}{R\'enyi divergence }
\newcommand{\RE}{R\'enyi entropy }
\newcommand{\dv}{\Vert}
\newcommand{\D}[3]{D_{#1}(#2 \dv #3)}
\newcommand{\SD}[3]{D_{#1}(#2 \dv #3)}
\newcommand{\spn}[1]{\operatorname{span} \left\{ #1 \right\}}
\newcommand{\supp}[1]{\operatorname{supp}\left(#1\right)}
\newcommand{\sgn}[1]{\operatorname{sgn}\left(#1\right)}

\newcommand{\ket}[1]{\vert#1 \rangle}
\newcommand{\outprod}[2]{\vert #1 \rangle \langle #2 \vert}

\newcommand{\Tr}{\operatorname{Tr}}
\newcommand{\rank}[1]{\operatorname{rank}\left( #1 \right)}

\DeclarePairedDelimiter{\ceil}{\lceil}{\rceil}

\usepackage[colorlinks=true,linkcolor=blue,citecolor=red,plainpages=false,pdfpagelabels]%
{hyperref}%

\newcolumntype{L}{>{\centering\arraybackslash}m{2.6cm}}

\newcolumntype{M}{>{\centering\arraybackslash}m{3.75cm}}

\allowdisplaybreaks

\begin{document}

\title{Relative Entropy and Catalytic Relative Majorization}

\author{Soorya Rethinasamy}

\affiliation{Birla Institute of Technology and Science, Pilani, Rajasthan 333031, India}
\affiliation{Hearne Institute for Theoretical Physics, Department of Physics and Astronomy, and Center for Computation and Technology,
Louisiana State University, Baton Rouge, Louisiana 70803, USA}

\author{Mark M. Wilde}

\affiliation{Hearne Institute for Theoretical Physics, Department of Physics and Astronomy, and Center for Computation and Technology,
Louisiana State University, Baton Rouge, Louisiana 70803, USA}

\affiliation{Stanford Institute for Theoretical Physics, 
Stanford University, Stanford, California 94305, USA}

\begin{abstract}
Given two pairs of quantum states, a fundamental question in the resource theory of asymmetric  distinguishability is to determine whether there exists a quantum channel converting one pair to the other. In this work, we reframe this question in such a way that a catalyst can be used to help perform the transformation, with the only constraint on the catalyst being  that its reduced state is returned unchanged, so that it can be used again to assist a future transformation. What we find here, for the special case in which the states in a given pair are commuting, and thus quasi-classical, is that this catalytic transformation can be performed if and only if the relative entropy of one pair of states is larger than that of the other pair. This result endows the relative entropy with a fundamental operational meaning that goes beyond its traditional interpretation in the setting of independent and identical resources. Our finding thus has an immediate application and interpretation in the resource theory of asymmetric distinguishability, and we expect it to find application in other domains.
\end{abstract}

\date{\today}

\maketitle

\section{Introduction}

The majorization partial ordering has been studied in light of the following question \cite{G.H.Hardy1952}: What does it mean for one probability distribution to be more disordered than another? While there exist several equivalent characterizations of majorization, the most fundamental is arguably the notion that a distribution $p$ majorizes another distribution $p'$ when $p'$ can be obtained from $p$ by the action of a doubly stochastic channel. This captures the intuition that $p'$ is more disordered than $p$.

The theory of majorization has several widespread applications, some of which are in quantum resource theories \cite{Chitambar_2019}. For example,   transformations of pure bipartite states by means of local operations and classical communication can be analyzed in terms of  majorization  of the Schmidt coefficients of the states \cite{Nielsen1999}. Majorization has been shown to determine whether state transformations are possible not just in the resource theory of entanglement, but also in coherence \cite{DBG15, Zhu_2017, Winter_2016} and purity \cite{Horodecki_2003} as well. 

As it turns out, majorization is a special case of a more general concept introduced in \cite{RM76,RSS78,RSS80}, which is now called relative majorization \cite{Renes_2016,Buscemi_2017}.
This more general notion also has a rich history in the context of statistics, where it goes by the name of ``statistical comparison'' or ``comparison of experiments'' \cite{Blackwell1953}. Relative majorization is further generalized by the concept of matrix majorization, as presented in \cite{Dahl99}, while the $d$-majorization of \cite{V71} is a special case of relative majorization.

To recall the notion of relative majorization, a pair $(p , q)$ of probability distributions $p$ and $q$ relatively majorizes another pair $(p' , q')$ of probability distributions $p'$ and $q'$  if there exists a classical channel (conditional probability distribution) that converts $p$ to $p'$ and $q$ to $q'$. That is, $(p , q)$ relatively majorizes  $(p' , q')$, denoted by
\begin{equation}
(p,q) \succ (p',q'),   \label{eq:rel-maj-notation} 
\end{equation}
if there exists a classical channel or stochastic matrix  $N$ such that $p' = N p$ and $q' = N q$. Deciding relative majorization can be accomplished by means of a linear program \cite{Dahl99} that is efficient in the alphabet sizes of the distributions.  In each pair $(p , q)$ and $(p' , q')$, if the second distributions $q$ and $q'$ are equal, then relative majorization collapses to $d$-majorization. If they are furthermore set to the uniform distribution, then $d$-majorization collapses to majorization, demonstrating that relative majorization is indeed a generalization of majorization. 

\begin{figure}
\begin{centering}
\includegraphics[width=0.48\textwidth]{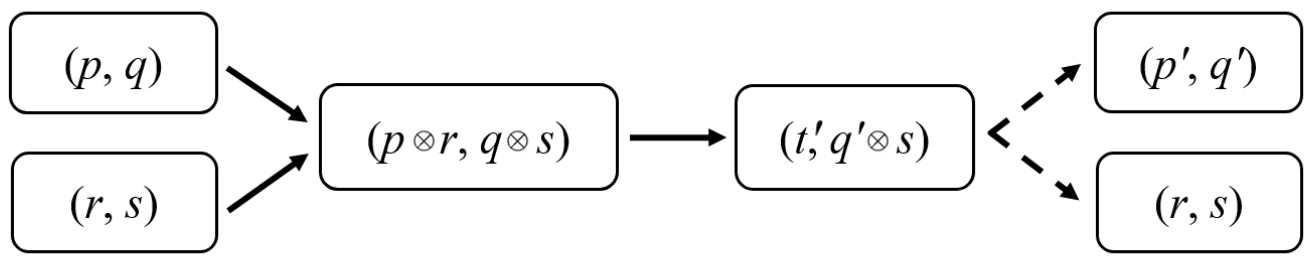}
\caption{$(p, q)$ is a pair of probability distributions to be converted to another pair $(p', q')$, by means of a classical channel and the use of the catalyst pair $(r, s)$. The distribution $t'$ after transformation has first marginal $p'$ and second marginal~$r$. The dashed arrows on the right indicate marginalization of the joint probability distributions $t'$ and $q' \otimes s$.}
\label{fig:flowchart}
\end{centering}
\end{figure}

Another generalization of majorization comes in the form of catalysis \cite{Jonathan1999}. A catalyst is an ancillary distribution whose presence enables certain transformations that would otherwise be impossible, with the only constraint being that its reduced state is returned unchanged. Catalysis has found application in various resource theories, including entanglement \cite{Jonathan1999, Daftuar2001}, thermodynamics \cite{Brand_o_2015}, purity \cite{Boes_2019}, and coherence \cite{BSW16, CZZ+19}.

In this paper, we reformulate the transformation task in relative majorization by allowing for a catalyst that can aid the transformation. In a general form for this task, a catalyst consists of a pair $(r, s)$ of probability distributions $r$ and $s$. The pair can be used in conjunction with the original input pair $(p , q)$ to generate the output pair $(p', q')$ by means of a classical channel $\Lambda$ acting on $\left(p \otimes r, q \otimes s\right)$. To make the catalytic task non-trivial, we demand that the first and second marginals of $\Lambda(p \otimes r)$ are $p'$ and $r$, respectively. We also demand for $\Lambda(q \otimes s)$ to be exactly equal to $q' \otimes s$. As a consequence of this demand, the catalyst is returned unchanged and can be used in future catalytic tasks for distributions that are independent of $p$ and $p'$ \footnote{The stated assumption about reuse of the catalyst is quite important, and violation of it could lead to problems in applications. To be clear, once the transformation from the distribution pair $(p,q)$ to the distribution pair $(p',q') $ is completed, it is necessary for an incoming distribution pair to be independent of $(p,q)$ in order to reuse the catalyst. If it is not, the conditions of our theorem do not hold and the new transformation is not guaranteed to succeed. See \cite{M_ller_2018} for further discussions.}. We call this task \textit{relative majorization assisted by a catalyst with correlations}. This task is succinctly summarized in Figure~\ref{fig:flowchart}.

We now summarize one result of our paper. We note that we focus on a special case of the task mentioned above, in which the second distribution $s$ of the catalyst pair is fixed to be the uniform distribution, denoted by $\eta$. Let $p$, $q$, $p'$, and $q'$ be probability distributions over the same alphabet. Suppose that $q$ and $q'$ have rational entries and full support. Define the \textit{relative spectrum} of the pair $(p,q)$ of probability distributions $p$ and $q$ to be the set formed from the distinct entries of the following set:
\begin{equation}
    \{p(x) / q(x) \}_x.
\end{equation}
Suppose further that the relative spectra of the pairs $(p,q)$ and $(p',q')$ are different.
  Then, under these assumptions on $p$, $q$, $p'$, and $q'$, it is possible to transform the pair $(p, q)$ to the pair $(p', q')$ in the above sense if and only if 
\begin{equation}
D(p \Vert q) > D(p' \Vert q') \text{ and } D_0(p\Vert q) \geq D_0(p'\Vert q')
\label{eq:summary-one-result}
\end{equation}
where the relative entropy $D(p \Vert q)$ is defined as
\begin{equation}
D(p \Vert q) := \sum_x p(x) \ln \!\left(\frac{p(x)}{q(x)}\right),
\end{equation}
and the min-relative entropy $D_0(p \Vert q)$ is defined as
\begin{equation}
D_0(p \Vert q) := -\ln \left[\sum_{x : p(x) \neq 0} q(x)\right].
\end{equation}
Observe that $D_0(p \Vert q) \geq 0$ for all distributions $p$ and $q$ and $D_0(p \Vert q) = 0$ if $p$ has full support, so that the min-relative entropy constraint in \eqref{eq:summary-one-result} can be interpreted as a technical condition.

Another contribution of our paper is that it is possible to perform an inexact, yet arbitrarily accurate transformation with fewer assumptions on $p$, $q$, $p'$, and $q'$, but again under the assumption that $s$ is the uniform distribution.

We establish these results by building on the prior work of \cite{Brand_o_2015, M_ller_2018}. As done in the prior works, we employ an embedding channel as a technical tool \cite{Brand_o_2015} (see also \cite[page 227]{RSS80}), and we also allow for the catalyst to become correlated with the target distribution~\cite{M_ller_2018}. 

We note here that our results apply to the more general ``quasi-classical'' case discussed in the abstract, in which the first pair consists of commuting quantum states $\rho$ and $\sigma$ and the second pair consists of commuting quantum states $\rho'$ and $\sigma'$. This follows as a direct consequence of the fact that commuting quantum states are diagonal in the same basis and thus effectively classical, along with the fact that there is a local unitary channel that takes the common basis of the first pair to the common basis of the second pair.

\section{Main result}

\label{sec:MainRes}

We begin by stating the main results of our paper and an associated corollary. We present all proofs in Appendices~\ref{sec:proofThm1}, \ref{sec:proofThm2}, and \ref{sec:proofCor1}, while Appendices~\ref{sec:Notat} and \ref{sec:TechLem} review some basics needed to prove the main results.

\begin{theorem}[One-Shot Characterization of  Exact Pair Transformations]
\label{Theorem1}
Let $p, q, p', q'$ be probability distributions  on the same alphabet, such that the relative spectra of the pairs $(p,q)$ and $(p',q')$ are different, and $q$ and $q'$ have full support and only rational entries. Then the following conditions are equivalent:
\begin{enumerate}
	\item $\D{}{p}{q} > \D{}{p'}{q'}$ and $\D{0}{p}{q} \geq \D{0}{p'}{q'}$
	\item For all $\gamma>0$, there exists a probability distribution~$r$, a joint distribution $t'$, and a classical channel $\Lambda$ such that
	\begin{enumerate}
		\item $\Lambda(p \otimes r) = t'$ with marginals $p'$ and $r$
		\item $\Lambda(q \otimes \eta) = q' \otimes \eta$
		\item $\D{}{t'}{p' \otimes r} < \gamma$
	\end{enumerate}
	where $\eta$ is the uniform distribution on the support of $r$.
\end{enumerate}
\end{theorem}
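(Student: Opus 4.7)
The plan splits into the two directions of the equivalence, with $(1) \Rightarrow (2)$ being the main content.

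For $(2) \Rightarrow (1)$, I would argue via data processing. Applying the data-processing inequality for relative entropy to the channel $\Lambda$ on the inputs $p \otimes r$ and $q \otimes \eta$, together with additivity under tensor products, gives $\D{}{p}{q} + \D{}{r}{\eta} \geq \D{}{t'}{q' \otimes \eta}$. A short chain-rule computation, using only that $t'$ has marginals $p'$ and $r$, produces the identity
\begin{equation*}
\D{}{t'}{q' \otimes \eta} = \D{}{p'}{q'} + \D{}{r}{\eta} + \D{}{t'}{p' \otimes r},
\end{equation*}
so that $\D{}{p}{q} \geq \D{}{p'}{q'} + \D{}{t'}{p' \otimes r}$. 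Letting $\gamma \to 0$ gives the weak inequality $\D{}{p}{q} \geq \D{}{p'}{q'}$; to upgrade to the strict form, I would argue by contradiction, noting that equality would force $t' = p' \otimes r$ exactly for every $\gamma > 0$, i.e., exact catalytic relative majorization with an uncorrelated catalyst, which by known comparison theorems forces the relative spectra of the two pairs to coincide, contradicting the hypothesis. The $D_0$ inequality then follows from the analogous data-processing property of $D_0$, together with the identity $\D{0}{p \otimes r}{q \otimes \eta} = \D{0}{p}{q}$, which holds because $\eta$ is uniform on $\supp{r}$.

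For $(1) \Rightarrow (2)$, I would follow the embedding-plus-catalysis program of \cite{Brand_o_2015, M_ller_2018} in three steps. First, using the rationality of the entries of $q$ and $q'$, apply the embedding channel of \cite{Brand_o_2015} to produce pairs $(\tilde p, u_N)$ and $(\tilde p', u_N)$ on a common enlarged alphabet of size $N$ (clearing denominators and padding with auxiliary uniform distributions as needed), where $u_N$ denotes the uniform distribution. The embedding preserves relative entropies, so the hypotheses translate to $H(\tilde p) < H(\tilde p')$ (from $\D{}{p}{q} > \D{}{p'}{q'}$, using $\D{}{\cdot}{u_N} = \ln N - H(\cdot)$) and $|\supp{\tilde p}| \leq |\supp{\tilde p'}|$ (from the $D_0$ inequality). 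Second, invoke the correlated-catalyst majorization theorem of \cite{M_ller_2018}, which under precisely these entropy and support conditions produces a catalyst $\tilde r$, a joint distribution $\tilde t'$ with marginals $\tilde p'$ and $\tilde r$, and a uniform-preserving channel $\tilde \Lambda$ on the enlarged alphabet satisfying $\tilde \Lambda(\tilde p \otimes \tilde r) = \tilde t'$ with $\D{}{\tilde t'}{\tilde p' \otimes \tilde r} < \gamma$. Third, pull back through the embedding: since the embedding sends $q \mapsto u_N$ and is compatible with tensor products involving uniform distributions, the uniform catalyst second-distribution $\eta$ in the original problem is preserved, and composing $\tilde \Lambda$ with the embedding on the input and a coarse-graining on the output yields a classical channel $\Lambda$ satisfying (2a)--(2c), with the $\gamma$-bound descending via the same chain-rule identity used in the easy direction.

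The trickiest step I anticipate is the first, where I must orchestrate the embedding so that $(p, q)$ and $(p', q')$ sit on a common enlarged alphabet with a common uniform reference, the entropy and support translations go through cleanly, and the rational and full-support assumptions are exploited to yield honest probability distributions rather than distributions with some entries absorbing denominators incorrectly. The $D_0$ condition, which at first looks like a mere technicality, is in fact precisely the support inequality needed to invoke the catalytic majorization theorem of \cite{M_ller_2018}, and the distinct-relative-spectra hypothesis is what guarantees the strict entropy gap $H(\tilde p) < H(\tilde p')$ is not vacuous; verifying this correspondence requires careful bookkeeping between the original and embedded alphabets. Steps 2 and 3 are then relatively mechanical applications of existing results combined with the identities established in the easy direction.
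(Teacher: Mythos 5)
Your proposal is correct and follows essentially the same route as the paper: embed via the rational entries of $q,q'$ to reduce to a uniform reference, translate the (min-)relative-entropy hypotheses into the entropy and support conditions of M\"uller's correlated-catalyst lemma, pull the resulting channel back through the embedding and its left inverse, and prove the converse by data processing plus the exact superadditivity decomposition $\D{}{t'}{q'\otimes\eta}=\D{}{p'}{q'}+\D{}{r}{\eta}+\D{}{t'}{p'\otimes r}$, upgrading to strictness via the Klimesh--Turgut characterization exactly as the paper does. The only detail you gloss over is that the catalyst $r$ produced by M\"uller's lemma must be reduced to its support (the paper's channel-splitting lemma) so that the doubly stochastic map preserves $\eta_N\otimes\eta$ with $\eta$ uniform on $\supp{r}$, but this is a routine repair.
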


We note that statement 2 above can alternatively be written in the notation of \eqref{eq:rel-maj-notation} as follows: For all $\gamma>0$, there exists a probability distribution~$r$ and a joint distribution $t'$ such that
	\begin{align}
		(p \otimes r, q \otimes \eta) & \succ  (t',q' \otimes \eta), \\
		\D{}{t'}{p' \otimes r} & < \gamma,
	\end{align}
	where   $t'$ has first marginal $p'$ and second marginal $r$, and $\eta$ is the uniform distribution on the support of $r$.

Theorem~\ref{Theorem1} indicates that the relative entropy is the main relevant information quantity that characterizes pair transformations when catalysts are available, other than the min-relative entropy as an additional technical condition in the case that $p'$ does not have full support. Furthermore, since $\gamma >0$ is arbitrary, the relative entropy $\D{}{t'}{p' \otimes r}$ can be made as small as desired and thus the output of the classical channel $\Lambda$ arbitrarily close to the product of $p'$ and $r$.

The constraint that the relative spectra are different, that $q$ and $q'$ are rational, and the min-relative entropy inequality can be relaxed by allowing for a slight error in the formation of the target state, as follows:

\begin{theorem}[One-Shot Characterization of Approximate Pair Transformations]
\label{Theorem2}
Given probability distributions $p, q, p', q'$ on the same alphabet, such that  $q$ and $q'$ have full support, the following conditions are equivalent:

\begin{enumerate}
	\item $\D{}{p}{q} \geq \D{}{p'}{q'}$
	\item For all $\varepsilon \in (0,1)$ and $\gamma>0$, there exist probability distributions $r$ and $p'_\varepsilon$, a joint distribution $t'_\varepsilon$, and a classical channel $\Lambda$ such that
	\begin{enumerate}
		\item $\Lambda(p \otimes r) = t'_\varepsilon$ with marginals $p'_\varepsilon$ and $r$,
		\item $\Lambda(q \otimes \eta) = q' \otimes \eta$,
		\item  $\frac{1}{2}\left\Vert p' - p'_\varepsilon \right\Vert_1 \leq \varepsilon$,
		\item $\D{}{t'_\varepsilon}{p'_\varepsilon \otimes r} < \gamma$,
\end{enumerate}
where $\left\Vert w - v\right\Vert_1 := \sum_i |w_i - v_i|$ 
	and  $\eta$ is the uniform distribution on the support of $r$.
\end{enumerate}
\end{theorem}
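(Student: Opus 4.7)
The plan is to invoke data processing for the relative entropy, followed by a chain-rule decomposition and continuity. Applied to the channel $\Lambda$ in item 2, data processing and additivity on product distributions give
\begin{equation}
\D{}{t'_\varepsilon}{q' \otimes \eta} \leq \D{}{p \otimes r}{q \otimes \eta} = \D{}{p}{q} + \D{}{r}{\eta}.
\end{equation}
Since $t'_\varepsilon$ has marginals $p'_\varepsilon$ and $r$, the left-hand side admits the chain-rule identity
\begin{equation}
\D{}{t'_\varepsilon}{q' \otimes \eta} = \D{}{t'_\varepsilon}{p'_\varepsilon \otimes r} + \D{}{p'_\varepsilon}{q'} + \D{}{r}{\eta}.
\end{equation}
Combining these relations and discarding the non-negative term $\D{}{t'_\varepsilon}{p'_\varepsilon \otimes r}$ yields $\D{}{p'_\varepsilon}{q'} \leq \D{}{p}{q}$. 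Sending $\varepsilon \to 0$, so that $p'_\varepsilon \to p'$ in total variation, and using continuity of $\D{}{\cdot}{q'}$ on the simplex (valid because $q'$ has full support), the required inequality $\D{}{p'}{q'} \leq \D{}{p}{q}$ follows.

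\textbf{Forward direction, (1) $\Rightarrow$ (2).} The plan is to reduce to Theorem~\ref{Theorem1} by suitable perturbations. Fix $\varepsilon, \gamma > 0$. The degenerate subcase $\D{}{p'}{q'} = 0$ forces $p' = q'$ and is handled directly by the channel sending every input to $q' \otimes \eta$. Otherwise, set $p'_\mu := (1-\mu) p' + \mu q'$; by joint convexity, $\D{}{p'_\mu}{q'} \leq (1-\mu) \D{}{p'}{q'}$, so $p'_\mu$ has full support, is $\mu$-close to $p'$ in total variation, and satisfies $\D{}{p}{q} > \D{}{p'_\mu}{q'}$. Next, approximate $q$ and $q'$ by rational distributions $\tilde q$ and $\tilde q'$ with full support, and further perturb $p'_\mu$ by an arbitrarily small amount to obtain $\tilde p'$ such that $(p, \tilde q)$ and $(\tilde p', \tilde q')$ have distinct relative spectra; continuity of the relative entropy ensures the strict inequality $\D{}{p}{\tilde q} > \D{}{\tilde p'}{\tilde q'}$ for fine enough approximations. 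Applying Theorem~\ref{Theorem1} to this rational pair transformation delivers a channel $\Lambda_0$, a catalyst $r_0$ with uniform $\eta_0$, and a joint distribution $t'_0$ such that $\Lambda_0(p \otimes r_0) = t'_0$ has marginals $\tilde p', r_0$, $\Lambda_0(\tilde q \otimes \eta_0) = \tilde q' \otimes \eta_0$, and $\D{}{t'_0}{\tilde p' \otimes r_0}$ is arbitrarily small.

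\textbf{Main obstacle: the exact $q$-condition for irrational inputs.} The channel $\Lambda_0$ produces the exact identity only for the rational proxies $\tilde q, \tilde q'$, whereas condition 2(b) demands the exact identity $\Lambda(q \otimes \eta) = q' \otimes \eta$ for the original, possibly irrational, $q$ and $q'$. I would address this by post-processing: form the convex combination $\Lambda = (1-\alpha) \Lambda_0 + \alpha \Lambda_{\mathrm{c}}$, where $\Lambda_{\mathrm{c}}$ is a correction channel, possibly defined on a suitably enlarged catalyst alphabet, designed so that $\Lambda$ maps $q \otimes \eta$ to $q' \otimes \eta$ exactly while maintaining that the second marginal of $\Lambda(p \otimes r)$ equals the updated catalyst $r$. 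Because $\Lambda_0(q \otimes \eta_0)$ differs from $q' \otimes \eta_0$ only by an $O(\Vert q - \tilde q\Vert_1 + \Vert q' - \tilde q'\Vert_1)$ perturbation in total variation, the weight $\alpha$ can be made arbitrarily small by choosing fine enough rational approximations. The combined effect of $\alpha$ and $\mu$ then propagates into $O(\alpha + \mu)$ deviations of the first marginal from $p'$ and of $\D{}{t'}{p'_\varepsilon \otimes r}$ from zero; the main technical delicacy is to simultaneously balance all of these contributions so that conditions 2(c) and 2(d) are met within the prescribed $\varepsilon, \gamma$ budgets while the exact equation 2(b) is preserved.
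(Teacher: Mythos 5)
Your reverse direction is correct and coincides with the paper's argument: data processing, the superadditivity/chain-rule decomposition of $\D{}{t'_\varepsilon}{q'\otimes\eta}$ (Lemma~\ref{lem:SuperS1}), cancellation of the finite term $\D{}{r}{\eta}$, and continuity as $\varepsilon\to 0$. Your forward direction also starts along the paper's lines (mixing $p'$ with $q'$ to get a full-rank $p''$ with a strict inequality via Lemma~\ref{lem:MixRenyi}, rational approximation of $q$ and $q'$, continuity, then the embedding and M\"uller machinery underlying Theorem~\ref{Theorem1}). You correctly isolate the crux --- making 2(b) hold \emph{exactly} for irrational $q,q'$ --- but your proposed resolution is a genuine gap.

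The convex combination $\Lambda=(1-\alpha)\Lambda_0+\alpha\Lambda_{\mathrm{c}}$ is not a construction: you never exhibit $\Lambda_{\mathrm{c}}$, and there are concrete obstructions to its existence in the form you describe. For $\Lambda(q\otimes\eta)=q'\otimes\eta$ one needs $\alpha\Lambda_{\mathrm{c}}(q\otimes\eta)=q'\otimes\eta-(1-\alpha)\Lambda_0(q\otimes\eta)$ to be entrywise nonnegative, and, worse, whatever $\Lambda_{\mathrm{c}}$ does on the input $p\otimes r$ contributes weight $\alpha$ to the output, which generically destroys the requirement in 2(a) that the second marginal of $\Lambda(p\otimes r)$ be \emph{exactly} $r$; an approximate second marginal is not permitted, so ``$\alpha$ small'' does not save you. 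The paper's device is different and is the actual content of the proof: Lemma~\ref{lem:ConvertRational} supplies a genuine stochastic pre-processing channel $E$ with $E(q)=q_d$ \emph{exactly} (rationalizing $q$ before the embedding), and a Bayesian reversal channel $E'^{*}$ (Definition~\ref{rmrk:Reversal}) with $E'^{*}(q'_{d'})=q'$ \emph{exactly} (de-rationalizing at the output), both of which move every \emph{other} distribution by only $O(\sqrt{\varepsilon})$ in trace distance. The final channel $\Lambda=(E'^{*}\otimes\operatorname{id})\circ(\Gamma_{d'}^{*}\otimes\operatorname{id})\circ\Phi_1\circ(\Gamma_d\otimes\operatorname{id})\circ(E\otimes\operatorname{id})$ then satisfies 2(b) identically, with no mixing and no correction term, while the $O(\sqrt{\varepsilon_i})+\delta$ perturbations accumulate only in the first marginal, giving 2(c). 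To repair your proof you would need to construct and analyze these exact conversion and reversal channels (or an equivalent); this is precisely the content you are missing. (Two smaller omissions: the $D_0$ hypothesis of Theorem~\ref{Theorem1} must be checked --- it holds automatically because $p''$ is full rank --- and the differing-spectra hypothesis follows already from the strict entropy inequality, so no extra perturbation of $p'_\mu$ is needed.)
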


We again note that statement 2 above can alternatively be written in the notation of \eqref{eq:rel-maj-notation} as follows: For all $\varepsilon \in (0,1)$ and $\gamma>0$, there exist probability distributions $r$ and $p'_\varepsilon$ and a joint distribution $t'_\varepsilon$ such that
	\begin{align}
		(p \otimes r, q \otimes \eta) & \succ (t'_\varepsilon , q' \otimes \eta), \\
		 \frac{1}{2}\left\Vert p' - p'_\varepsilon \right\Vert_1 & \leq \varepsilon,\\
		 \D{}{t'_\varepsilon}{p'_\varepsilon \otimes r} & < \gamma,
\end{align}
where $t'_\varepsilon$ has first marginal $p'_\varepsilon$ and second marginal $r$,
	and  $\eta$ is the uniform distribution on the support of $r$.

The relative entropy $\D{}{t'_\varepsilon}{p'_\varepsilon \otimes r}$ can be made as small as desired and thus the output of the classical channel $\Lambda$ arbitrarily close to the product of $p'_\varepsilon$ and $r$. Furthermore, the state $p'_\varepsilon$ can be arbitrarily close to the target distribution $p'$ in normalized trace distance. Note that allowing for a slight error in the transformation of $p$ while demanding no error in the transformation of $q$ is consistent with the frameworks of \cite{matsumoto2010reverse,WW19,Buscemi2019}, with the framework of \cite{WW19} being known as the ``resource theory of asymmetric distinguishability'' due to this asymmetry in the transformation.\\

We now consider a special case of Theorem~\ref{Theorem2}, where we fix both $q$ and $q'$ to be the uniform distribution. A closely related, yet different claim is given as \cite[Lemma~3.10]{L16thesis}.

\begin{corollary}
\label{Corollary1}
Given probability distributions $p$ and $p'$ on the same alphabet such that $p \neq P p'$ for all permutation matrices $P$, the following conditions are equivalent:
\begin{enumerate}
	\item $H(p) \leq H(p')$
	\item For all $\varepsilon \in (0,1)$ and $\gamma>0$, there exist probability distributions $r$ and $p'_\varepsilon$, a joint distribution $t'_\varepsilon$, and a unital classical channel $\Lambda$ such that
	\begin{enumerate}
		\item $\Lambda(p \otimes r) = t'_\varepsilon$ with marginals $p'_\varepsilon$ and $r$,
		\item $\frac{1}{2} \left\Vert p' - p'_\varepsilon \right\Vert_1 \leq \varepsilon$,
		\item $\D{}{t'_\varepsilon}{p'_\varepsilon \otimes r} < \gamma$.
	\end{enumerate}
\end{enumerate}
\end{corollary}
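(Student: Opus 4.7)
The plan is to obtain Corollary~\ref{Corollary1} as a direct specialization of Theorem~\ref{Theorem2} with $q = q' = u$, the uniform distribution on the common alphabet of size $d$. Two elementary translations suffice: one between the relative entropy inequality of Theorem~\ref{Theorem2} and the entropy inequality in the corollary, and one between the preservation condition $\Lambda(q\otimes\eta) = q'\otimes\eta$ and the unital requirement on $\Lambda$.

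For the first translation, I would use the identity $\D{}{p}{u} = \ln d - H(p)$, which follows directly from the definition of relative entropy. Consequently,
\begin{equation}
\D{}{p}{u} \geq \D{}{p'}{u} \iff H(p) \leq H(p'),
\end{equation}
which matches statement~1 of Theorem~\ref{Theorem2} with statement~1 of the corollary. For the second translation, assume (without loss of generality, after restricting to $\supp{r}$) that $r$ has full support on the catalyst alphabet; then $\eta$ is uniform on that alphabet and $u\otimes\eta$ is uniform on the joint alphabet. A stochastic matrix preserves the joint uniform distribution if and only if its columns \emph{and} its rows sum to one, i.e., if and only if it is unital. Hence the condition $\Lambda(u\otimes\eta) = u\otimes\eta$ is equivalent to $\Lambda$ being a unital classical channel.

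With these two translations in hand, both implications of the corollary follow immediately from Theorem~\ref{Theorem2}. For $(1)\Rightarrow(2)$, $H(p)\leq H(p')$ yields $\D{}{p}{u} \geq \D{}{p'}{u}$, so Theorem~\ref{Theorem2} produces $r$, $p'_\varepsilon$, $t'_\varepsilon$, and a classical channel $\Lambda$ satisfying its conditions (a)--(d); condition (b) forces $\Lambda$ to be unital, delivering statement~2 of the corollary. Conversely, for $(2)\Rightarrow(1)$, a unital $\Lambda$ automatically satisfies $\Lambda(u\otimes\eta) = u\otimes\eta$, so statement~2 of Theorem~\ref{Theorem2} holds with $q = q' = u$, whence $\D{}{p}{u} \geq \D{}{p'}{u}$ and therefore $H(p) \leq H(p')$. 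The hypothesis $p \neq Pp'$ for all permutation matrices $P$ plays the role of a non-triviality assumption, excluding the degenerate case in which $p'$ can be obtained from $p$ by mere relabeling.

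I do not anticipate any substantive obstacle, since the argument is little more than a change of variables in Theorem~\ref{Theorem2}. The only delicate point is handling $\supp{r}$ carefully when identifying $\Lambda(u\otimes\eta) = u\otimes\eta$ with unitality: if $r$ fails to have full support, one either restricts $\Lambda$ to the support of $u\otimes\eta$ and extends it by the identity elsewhere to obtain a unital channel on the whole joint alphabet, or, working the other way, uses the fact that a unital $\Lambda$ automatically fixes every uniform input on any invariant subset. Either way, this is a minor technicality rather than a conceptual difficulty.
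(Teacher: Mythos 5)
Your proposal is correct and follows essentially the same route as the paper, which likewise obtains the corollary by setting $q = q' $ equal to the uniform distribution in Theorem~\ref{Theorem2}, converting $D(p\Vert \eta)\geq D(p'\Vert \eta)$ to $H(p)\leq H(p')$ via $D(p\Vert\eta)=\ln d - H(p)$, and identifying preservation of the uniform distribution with unitality. Your extra care about $\supp{r}$ is a reasonable elaboration (the paper handles it implicitly, since its proof of Theorem~\ref{Theorem2} already reduces to a full-rank catalyst via the splitting lemma), but it does not change the argument.
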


A unital classical channel is one that preserves the uniform distribution. The corollary states that, given a catalyst, the feasibility condition for the transformation is based exclusively on the Shannon entropy. Furthermore, the relative entropy $\D{}{t'_\varepsilon}{p'_\varepsilon \otimes r}$ can be made as small as desired and thus the output of the classical channel $\Lambda$ arbitrarily close to the product of $p'_\varepsilon$ and $r$. Lastly, the state $p'_\varepsilon$ can be arbitrarily close to the target distribution $p'$ in normalized trace distance.

We note here that Corollary~\ref{Corollary1} applies more generally to quantum states $\rho$ and $\rho'$ of the same dimension and with  different spectra, where the first statement involves the von Neumann entropy and the second statement has $\Lambda$ become a quantum channel that incorporates a local unitary performing a change of basis from the eigenbasis of $\rho$ to the eigenbasis of $\rho'$. 

\section{Discussions and Implications}

\label{sec:Discuss}

In this section, we discuss various implications of our results. A catalyst is a special ancillary probability distribution whose presence enables certain transformations that would otherwise be impossible. As a consequence, broadening the class of transformations allowed in turn broadens the set of input and output distributions satisfying the mathematical condition governing the feasibility of a transformation \cite{Jonathan1999}. Additionally, allowing for the catalyst to be correlated with the target distribution, subject to the condition that its reduced state is returned unchanged, previously impossible transformations become possible.
Since the catalyst is returned unchanged, the paradigm of catalytic majorization with correlations is physically relevant and justified. In this latter case, further broadening the class of transformations allowed even simplifies the mathematical condition governing feasibility of a transformation to a single entropic condition. 

In the resource theories of pure-state entanglement, purity, and pure-state coherence, state transformations in the absence of a catalyst are governed by the majorization partial order, whereas in the resource theory of quasi-classical thermodynamics, transformations are governed by thermo-majorization \cite{HO13}. The catalytic majorization condition is equivalent to an infinite set of conditions on the R\'enyi entropy \cite{klimesh2007inequalities, Turgut_2007}. The  catalytic majorization condition is not as strict as the majorization condition, as there exist distributions $p$ and $q$ for which $p$ does not majorize $q$ but $p$ catalytically majorizes $q$ \cite{Jonathan1999}, establishing that catalysis provides an advantage.

As mentioned above, catalytic majorization can be further relaxed. In catalytic majorization, the target state, after transformation, is required to be in a product state with the catalyst; i.e., the catalyst and the target state are independent after the transformation. Relaxing this additional assumption and allowing the final state to be a non-product state, with the condition that the local states remain unchanged, further broadens the transformations allowed and simplifies the mathematical condition governing the feasibility of a transformation. This is a generalization of catalytic majorization, called catalytic majorization with correlations, as a product state is a special case of a general distribution. 

Our result in Theorem~\ref{Theorem2} is a necessary and sufficient condition for the last case, where we allow catalysis and correlations. The result is a single entropic condition involving relative entropy, and it neatly ties together catalytically correlated transformations in the resource theories of asymmetric distinguishability, quasi-classical thermodynamics, and purity into a single paradigm, due to the generality of our setting. Our result also introduces a different operational meaning for the relative entropy, by establishing its relevance in the single-shot regime. 


\begin{table*}
\begin{center}
    \begin{tabular}{|c||cl|cl|cl|Ml|}
        \hline 
         \diagbox[]{Resource\\Theory}{Regime} & \multicolumn{2}{c|}{Asymptotic}    & \multicolumn{2}{c|}{One-Shot} &
         \multicolumn{2}{c|}{One-shot + Catalysis}
           &  \multicolumn{2}{c|}{One-shot + Catalysis + Corr.}  \\
        \toprule 
        Entanglement & $\frac{H(p)}{H(p')}$  & \cite{BBPS96} & $p \prec p'$ & \cite{Nielsen1999} & $H_\alpha(p) \geq H_\alpha(p')$ &  \cite{klimesh2007inequalities,Turgut_2007} & ? 
         & \\
        \hline
        Coherence & $\frac{H(p)}{H(p')}$ & \cite{Winter_2016} & $p \prec p'$ & \cite{DBG15} & $H_\alpha(p) \geq H_\alpha(p')$ & \cite{BSW16} & ? & 
        \\
        \hline
        Thermodynamics & $\frac{\D{}{p}{\gamma}}{\D{}{p'}{\gamma}} $ & \cite{BHORS13}  & $p \succ_{\text{thermo}}  p'$ & \cite{HO13} & $F_\alpha(p) \geq F_\alpha(p')$ & \cite{Brand_o_2015} & $F(p) \geq F(p')$ & \cite{M_ller_2018} \\
        \hline
        Distinguishability & $\frac{\D{}{p}{q}}{\D{}{p'}{q'}}$&  \cite{WW19,Buscemi2019} & $(p, q) \succ (p', q')$ & \cite{Blackwell1953,Dahl99} & $\D{\alpha}{p}{q} \geq \D{\alpha}{p'}{q'}$ &  \cite{Brand_o_2015} & $\bm{\D{}{p}{q} \geq \D{}{p'}{q'}}$ & \\
        \hline
    \end{tabular}
\caption{Results in various regimes and resource theories: the table entries feature conditions for determining whether a transformation is possible in a given regime and resource theory. In all cases, the initial resource is unprimed and labeled by $p$ or $(p,q)$, and the target resource is primed and labeled by $p'$ or $(p',q')$. The bottom right cell with a bold entry indicates our contribution.  The one-shot regime with catalysis depends on $H_\alpha$ or $D_\alpha$ for all $ \alpha \in [-\infty, \infty]$. In the one-shot cases, by broadening the class of transformations allowed, more distributions can satisfy the mathematical conditions governing the feasibility of a transformation. When catalytic transformations with correlations are allowed, the mathematical condition governing feasibility of a transformation becomes a simple entropic condition. In the various resource theories listed in the table, $p$ refers to different objects: for entanglement theory, it denotes the Schmidt coefficients, for coherence, the incoherent coefficients, and for thermodynamics, the block-diagonal energy coefficients (see, e.g.,~\cite{PhysRevLett.122.110403}). We have placed question marks in the table entries that have not been explored (to the best of our knowledge).}
\label{table:results}
\end{center}
\end{table*}

\section{Related work}

There are some regimes of operating that are complementary to the ones given in our paper. This section provides a brief overview of these regimes and recalls some known results.

Classical relative majorization in \eqref{eq:rel-maj-notation} is known to be equivalent to the hypothesis testing region of $(p',q')$ being contained in the hypothesis testing region of $(p,q)$ \cite{Blackwell1953} (see also, e.g.,  \cite{Dahl99}, where the region is called a ``zonotope'', or \cite{Renes_2016} with motivations coming from quantum resource theories).
Quantum relative majorization was defined recently in \cite{Buscemi_2017} as an intended generalization of classical relative majorization concept in \eqref{eq:rel-maj-notation}, in which we write $(\rho,\sigma) \succ (\rho',\sigma')$, where $\rho$, $\sigma$, $\rho'$, and $\sigma'$ are quantum states, to indicate that the hypothesis testing region for $(\rho',\sigma')$ lies inside that of $(\rho,\sigma)$. Other equivalent conditions are given in \cite{Buscemi_2017}. In the case of qubit systems, the condition $(\rho,\sigma) \succ (\rho',\sigma')$ is equivalent to there existing a quantum channel $\mathcal{N}$ such that \cite{Alberti1980}
\begin{align}
\mathcal{N}(\rho) & = \rho', \label{eq:q-ch-transform-1} \\
\mathcal{N}(\sigma) & = \sigma' \label{eq:q-ch-transform-2},
\end{align}
which is consistent with \eqref{eq:rel-maj-notation}.
As stated above, for the fully classical case, which  has apparently motivated much of the related work in quantum information, the containment of testing regions is equivalent to  \eqref{eq:rel-maj-notation} \cite{Blackwell1953}. 
However, the equivalence no longer holds as soon as the dimension of the input states is three, with the dimension of the output states still being two \cite{matsumoto2014example}. Regardless, given states $\rho$, $\sigma$, $\rho'$, and $\sigma'$,
the existence of a quantum channel $\mathcal{N}$ satisfying \eqref{eq:q-ch-transform-1}--\eqref{eq:q-ch-transform-2} can be decided by means of a semi-definite program \cite{Gour_2018} (see also
\cite{WW19}). This latter paper \cite{Gour_2018} also developed a more general notion of majorization, called quantum majorization, which is based on \eqref{eq:q-ch-transform-1}--\eqref{eq:q-ch-transform-2}, but generalizes it further.

Majorization and relative majorization are most prominent in the single-shot regime, in which there is no assumed structure on the states involved. There exists another regime that deals with situations in which many independent and identically distributed (i.i.d.)~systems are available, often referred to as the asymptotic regime. In the asymptotic regime, perhaps the more pertinent question is that of rate of conversion from initial to target distributions and not necessarily feasibility of the transformation. For the resource theories of pure-state entanglement, pure-state coherence, and  quasi-classical thermodynamics, the rate of conversion is given by ratios of the entropy (pure-state entanglement and pure-state coherence) and relative entropy to the Gibbs distribution (quasi-classical thermodynamics). The asymptotic version of Blackwell's ``comparison of experiments'' paradigm was studied in \cite{matsumoto2010reverse, WW19, Buscemi2019}. In \cite{WW19}, the problem was approached by introducing the resource theory of asymmetric distinguishability, which allowed for analyzing this problem in a resource-theoretic way. The result of \cite{WW19, Buscemi2019} is that, in the asymptotic regime, the rate of conversion between two pairs of probability distributions $(p , q)$ and $(p' , q')$ is given by the ratio of the relative entropy of the pairs, thus enhancing the fundamental operational meaning of the relative entropy. This asymptotic result generalizes that from \cite{BHORS13}, which focused on the resource theory of thermodynamics in which the second states in the initial and final pairs are set to the thermal states.

Some prior results in various regimes and resource theories are presented in Table~\ref{table:results}.
As indicated in our paper and the table, our results are relevant for the single-shot regime in the context of relative catalytic majorization with correlations and are thus complementary to this prior work.

\section{Conclusion}

\label{sec:conclusion}

The majorization partial order was developed to capture the notion of disorder in probability distributions and governs transformations in various resource theories. The addition of a catalyst allows for transformations that are not possible in the framework of majorization without catalysis. Furthermore, allowing correlations between the target and the catalyst further broadens the class of transformations allowed and simplifies the feasibility condition for a transformation to a simple entropic condition.

The main result of our paper is that a pair $(p, q)$ of probability distributions can be converted approximately to another pair $(p', q')$ of probability distributions assisted by a catalyst with correlations if and only if $D(p \Vert q) \geq D(p' \Vert q')$. As mentioned earlier, we focus on the special case in which the second distribution $s$ of the catalyst pair is set to the uniform distribution. Our result thus extends the operational meaning of the relative entropy beyond the traditional i.i.d.~regime to the single-shot regime. This finding thus complements and enhances the previous finding from \cite{Boes_2019} in the context of entropy and the resource theory of purity.

There are several open questions yet to be tackled. The most pertinent is whether our result applies to pairs of general quantum states $(\rho, \sigma)$ and $(\tau, \omega)$ and quantum catalysts. This appears to be a very challenging question that will likely require new techniques. It would also be interesting to generalize the main results to continuous probability density functions, and it seems that the methods of \cite{RSS80} should be helpful in this regard. Another avenue to be explored is whether our result applies to larger sets of distributions, and not just pairs, i.e., the conversion of the set $\{p_1, p_2, \ldots, p_k\}$ of $k$ probability distributions to another set $\{q_1, q_2, \ldots, q_k\}$ of $k$ probability distributions for $k>2$. Lastly, another open question is whether the second element $\eta$ of the catalyst in Theorems~\ref{Theorem1} and \ref{Theorem2}  can feature an  arbitrary probability distribution instead of the uniform distribution. 

\begin{acknowledgments}
We thank Francesco Buscemi, Vishal Katariya, and Aby Philip for discussions. We are grateful to the anonymous referees for comments that helped to improve our paper. We acknowledge support from NSF Grant No.~1907615. SR thanks the Department of Physics and Astronomy at Louisiana State University for hospitality during his research visit in Fall 2019. SR also thanks the Department of Physics, Birla Institute of Technology and Science for this opportunity. MMW acknowledges support from Stanford QFARM and AFOSR (FA9550-19-1-0369).
\end{acknowledgments}

\bibliographystyle{unsrt}

\bibliography{draft2}

\appendix

\section{Notations and Definitions}

\label{sec:Notat}

We review various distinguishability measures and their properties in this appendix. Furthermore, we recall the definition of majorization and the embedding channel, which are used in the forthcoming theorems and lemmas.

\subsection{R\'enyi Divergence}

\label{subsec:RD}

For probability distributions $p = \{ p_i\}_{i=1}^k$ and $q = \{ q_i\}_{i=1}^k$, the \RD is defined for all $ \alpha \in \mathbb{R} \setminus \{ 0, 1, \infty, -\infty \}$ as follows \cite{renyi1961}:
\begin{equation}
	\D{\alpha}{p}{q} := \frac{\sgn{\alpha}}{\alpha-1} \ln \sum\limits_{i=1}^k p_i^\alpha q_i^{1-\alpha},
\end{equation}
where
\begin{equation}
        \sgn{\alpha} := 
        \begin{cases}
            1 & \alpha \geq 0 \\
            -1 & \alpha <0.
        \end{cases}
    \end{equation}
    The \RD can alternatively be written as follows:
    \begin{align}
        \D{\alpha}{p}{q} & = \frac{\sgn{\alpha}}{\alpha-1} \ln \sum\limits_{i=1}^k p_i \left(\frac{p_i}{q_i}\right)^{\alpha-1} \\
        & = \frac{\sgn{\alpha}}{\alpha-1} \ln \sum\limits_{i=1}^k q_i \left(\frac{p_i}{q_i}\right)^{\alpha},
    \end{align}
    which allows one to think of the expressions inside the logarithm as averages of tilted likelihood ratios.
Note that when $\alpha < 0$, the following identity holds
\begin{equation}
    \D{\alpha}{p}{q} = \frac{|\alpha|}{|\alpha|+1} \D{|\alpha|+1}{q}{p},
    \label{eq:neg-alpha-id-renyi-div}
\end{equation}
so that properties of $\D{\alpha}{p}{q}$ for negative values of $\alpha$ can be deduced from properties of $\D{\alpha}{p}{q}$ for values of $\alpha > 1$.

For $\alpha \in \{ 0, 1, \infty, -\infty\}$, we evaluate $\D{\alpha}{p}{q}$ as a limit to these values and arrive at the following expressions:
\begin{align}
\D{0}{p}{q} &= -\ln \sum\limits_{i=1:p_i\neq 0}^k q_i, \\
\D{1}{p}{q} &= \sum\limits_{i=1}^k p_i \ln \left(\frac{p_i}{q_i}\right),  \\
\D{\infty}{p}{q} &=\ln \max_i \frac{p_i}{q_i},  \\
\D{-\infty}{p}{q} &= \D{\infty}{q}{p}.
\end{align}
Note that $\D{\infty}{p}{q}$ can also be expressed as 
\begin{equation}
\D{\infty}{p}{q} =\ln \min \left\{ \lambda\ \middle| \ \lambda \geq \frac{p_i}{q_i} \ \forall i \right\}.
\end{equation}

An important property of the \RD is that it obeys the data-processing inequality \cite{van_Erven_2014}; i.e., for all classical channels $\Lambda$ and $ \alpha \in [-\infty, \infty]$, the following inequality holds
\begin{equation}
    \label{RenDivDatProc}
	\D{\alpha}{p}{q} \geq \D{\alpha}{\Lambda(p)}{\Lambda(q)}.
\end{equation}

Another important property of the \RD is that, for $\alpha \in [-\infty, \infty]$, it is non-negative
\begin{equation}
    \label{RenDivGreater0}
    \D{\alpha}{p}{q} \geq 0
\end{equation}
for all probability distributions $p$ and
$q$, and for all $\alpha \in [-\infty, \infty]\setminus \{0\}$, it is equal to zero
\begin{equation}
    \label{RenDivEqual0}
    \D{\alpha}{p}{q} = 0
\end{equation} 
if and only if $p = q$. For $\alpha = 0$, it is equal to zero if $p = q$, but the converse is not generally true.

The \RD is additive for product probability distributions \cite{van_Erven_2014}; i.e., for probability distributions $p_1$, $p_2$, $q_1$, and $q_2$, the following equality holds
\begin{equation}
    \label{RenDivAdd}
    \D{\alpha}{p_1 \otimes p_2}{q_1 \otimes q_2} = \D{\alpha}{p_1}{q_1} + \D{\alpha}{p_2}{q_2} .
\end{equation}

\subsection{R\'enyi Entropy}
\label{subsec:RE}

For a probability distribution $p = \{ p_i\}_{i=1}^k$, the \RE is defined for all $ \alpha \in \mathbb{R} \setminus \{ 0, 1, \infty, -\infty \}$ as follows~\cite{renyi1961}:
\begin{equation}
	H_\alpha(p) := \frac{\sgn{\alpha}}{1-\alpha}\ln \sum\limits_{i=1}^k p_i^\alpha.
\end{equation}
Similarly to the R\'enyi divergence, we also define the \RE for $\alpha \in \{ 0, 1, \infty, -\infty\}$ as a limit to these values, and the result is the following expressions:
\begin{align}
H_0(p) &=\ln \rank{p},  \\
H_1(p) &= -\sum\limits_{i=1}^k p_i\ln p_i,  \\
H_\infty(p) &= -\ln p_{\max},  \\
H_{-\infty}(p) &= \ln p_{\min}.
\end{align}
The following equality relates the \RD to the \RE for all $\alpha \in [-\infty,\infty]$:
\begin{equation}
     D_{\alpha}(p \Vert \eta_k) = \ln k - H_\alpha(p),
     \label{eq:renyi-div-to-unif-relate-ent}
\end{equation}
where $\eta_k$ is the uniform distribution on the alphabet of~$p$.

\subsection{Quantum R\'enyi Divergence}
\label{subsec:QRD}

The \RD from the classical regime can be extended to the quantum regime, but the extension is not unique. Classical probability distributions can be modeled as quantum density operators that are diagonal in a fixed orthonormal basis. Thus, modelling two probability distributions is equivalent to working with two density operators that commute with each other, i.e., that are diagonal in the same basis. However, quantum states do not generally commute, which leads to the non-uniqueness of the quantum extension. Note that any quantum extension should collapse to the classical case if the states commute.

Here, we employ the following quantum \RD   \cite{PETZ198657}:
\begin{equation}
	D_\alpha(\rho \Vert \sigma) := \frac{\sgn{\alpha}}{\alpha-1}\ln \Tr( \rho^\alpha \sigma^{1-\alpha}).
\end{equation}
The \RD $D_\alpha$ converges to the following expressions in the limit as $\alpha$ tends to either zero or one:
\begin{align}
D_0(\rho \Vert \sigma) &= -\ln \Tr(\Pi_\rho \sigma) ,  \\
D_1(\rho \Vert \sigma) &= \Tr(\rho(\ln \rho - \ln \sigma)),
\end{align}
where $\Pi_\rho$ is the projector onto the support of  $\rho$.  The quantum \RD obeys the data-processing inequality for all quantum channels $\Lambda$ and $\alpha \in [0 , 2]$ \cite{PETZ198657}:
\begin{equation}
D_\alpha(\rho \Vert \sigma) \geq D_\alpha(\Lambda(\rho) \Vert \Lambda(\sigma)).
\end{equation}
Another property of the quantum \RD is that for states $\rho$ and $ \sigma$ \cite{PETZ198657}, it is non-negative
\begin{equation}
\label{QuaRenDiv>0}
    \SD{\alpha}{\rho}{\sigma} \geq 0,
\end{equation}
and for all $\alpha \in [-\infty,\infty]\setminus \{0\}$, it is equal to zero
\begin{equation}
    \SD{\alpha}{\rho}{\sigma} = 0
\end{equation}
if and only if $\rho = \sigma$. If $\alpha = 0$, it is equal to zero if $\rho = \sigma$ but the converse is not generally true.

\subsection{Majorization and its Extensions}
\label{subsec:mjrz}

The theory of majorization was developed to capture the notion of disorder. The crux of the theory is to answer the question: When is one probability distribution more disordered than another? The results can be succinctly put forth as follows \cite{G.H.Hardy1952}.

For two probability distributions $p, q \in \mathbb{R}^k $, we say that ``$p$ majorizes $q$'' or $p \succ q$ if for all $n \in \{ 1, \ldots, k-1\}$,
\begin{align}
	\sum\limits_{i=1}^n p_i^\downarrow &\geq \sum\limits_{i=1}^n q_i^\downarrow,  \\
	\sum\limits_{i=1}^k p_i^\downarrow &= \sum\limits_{i=1}^k q_i^\downarrow,
\end{align}
where $p^\downarrow$ is the reordering of $p$ in descending order. If $p \succ q$, it follows that $q$ is more disordered than $p$ in a precise sense stated in Lemma \ref{lem:mjrz} in Appendix~\ref{sec:TechLem}. That is, there are several formulations of majorization that are known to be equivalent.
 
An extension of the theory of majorization comes in the form of catalysis. Given two distributions $p$ and $q$ such that $p \not \succ q$, does there exist a distribution $r$ such that $p \otimes r \succ q \otimes r$? If such a distribution exists, it can be used to transform $p$ into $q$ and be left unchanged. Since $r$ is unaffected by the process, we call it the catalyst of the transformation. We say that $q$ is catalytically majorized by $p$, written $p \succ_T q$, if there exists a catalyst $r$ such that $p \otimes r \succ q \otimes r$. The theory of catalytic majorization is not as well understood as majorization. However, several important results have been established \cite{klimesh2007inequalities, Turgut_2007}.

\subsection{Embedding Channel}

\label{subsec:embed}

A mathematical tool frequently used in the resource theory of thermodynamics is an embedding channel, which is a classical channel that maps a thermal distribution to a uniform distribution \cite{Brand_o_2015}. We use this tool in a more general sense needed in our context here, in order to map a probability distribution described by a set of rational numbers to a uniform distribution. Note that the embedding channel was originally introduced in \cite[page 227]{RSS80} for continuous probability density functions.

Consider the simplex $\{p_i\}_{i=1}^k$ of probability distributions  and a set $d:=\{d_i\}_{i=1}^k$ of natural numbers  with
\begin{equation}
N := \sum\limits_{i=1}^k d_i.    
\end{equation}
The embedding channel $\Gamma_d : \mathbb{R}^k \rightarrow \mathbb{R}^N$, corresponding to the set $d$, is defined as
\begin{align}
\Gamma_d(p) & := \bigoplus_{i=1}^k p_i \eta_i \\
& = \left( \underbrace{\frac{p_1}{d_1}, \ldots, \frac{p_1}{d_1}}_{d_1}, \ldots, \underbrace{\frac{p_k}{d_k}, \ldots, \frac{p_k}{d_k}}_{d_k} \right),
\end{align}
where $\eta_i = \left( \frac{1}{d_i}, \ldots, \frac{1}{d_i} \right) \in \mathbb{R}^{d_i}$ is the uniform distribution.

There exists a left inverse for the embedding channel~$\Gamma_d$, which is denoted by $\Gamma^*_d : \mathbb{R}^N \rightarrow \mathbb{R}^k$ and defined as follows:
\begin{equation}
\label{PsInvEmbed}
\Gamma^*_d(x) := \left( \sum\limits_{i=1}^{d_1} x_i, \sum\limits_{i=d_1+1}^{d_1 + d_2} x_i, \ldots, \sum\limits_{i = d_1 + \ldots + d_{n-1}+1}^{N} x_i\right).
\end{equation}
Note that $\Gamma^*_d$ is itself a classical channel, and furthermore, it reverses the action of $\Gamma_d$ in the following sense:
\begin{equation}
\Gamma^*_d \circ \Gamma_d = \operatorname{id},   
\label{eq:left-inv-ch-embed}
\end{equation}
where $\operatorname{id}$ is the identity channel.

\section{Technical Lemmas}

\label{sec:TechLem}

Several of the technical lemmas presented in this appendix have been established in prior work. Here we list them, with modifications and extensions, for convenience and completeness. 

\begin{lemma}[Majorization \cite{marshall2011inequalities, G.H.Hardy1952}]
\label{lem:mjrz}
Let $x,y\in  \mathbb{R}^k$ be probability distributions. Then the following are equivalent:
\begin{enumerate}
	\item $x \succ y$.
	
	\item For $n \in \{ 1, \ldots, k-1\}$, \begin{align}
	    \sum\limits_{i=1}^n x_i^\downarrow & \geq \sum\limits_{i=1}^n y_i^\downarrow , \\ \sum\limits_{i=1}^k x_i^\downarrow & = \sum\limits_{i=1}^k y_i^\downarrow .
	\end{align}
	
	\item $y = D x$ for some $k \times k$ doubly stochastic matrix $D$.
	
	\item For all $t \in \mathbb{R}$, $\sum\limits_{i=1}^k \vert x_i - t \vert \geq \sum\limits_{i=1}^k \vert y_i - t \vert$.
\end{enumerate}
\end{lemma}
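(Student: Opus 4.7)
My plan is to establish the equivalences via the cycle $(1) \Leftrightarrow (2) \Rightarrow (3) \Rightarrow (4) \Rightarrow (2)$. The first equivalence is immediate because the relation $x \succ y$ was defined in Section~\ref{subsec:mjrz} as exactly the partial-sum conditions of item 2, so nothing needs to be proven there. For $(2) \Rightarrow (3)$, I would adapt the classical T-transform argument of Hardy, Littlewood, and P\'olya. After a sorting permutation (itself doubly stochastic) I may assume $x = x^\downarrow$ and $y = y^\downarrow$. If $x \neq y$, let $i$ be the smallest index with $x_i \neq y_i$; the partial-sum conditions at $n = i - 1$ and $n = i$ force $x_i > y_i$, and normalization then forces a smallest index $j > i$ with $x_j < y_j$. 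Setting $\epsilon := \min\{x_i - y_i,\, y_j - x_j\} > 0$ and $\theta := \epsilon/(x_i - x_j) \in (0, 1/2]$, the T-transform $T := (1-\theta) I + \theta P_{ij}$ is doubly stochastic, and $x' := Tx$ still majorizes $y$: the partial sums are unchanged for $m < i$ and $m \geq j$, while for $i \leq m < j$ the slack $\sum_{\ell=1}^m (x_\ell - y_\ell) \geq x_i - y_i \geq \epsilon$ absorbs the decrement. Since $x'$ agrees with $y$ in at least one more coordinate than $x$ did, iterating (interleaved with re-sorting permutations) produces in at most $k$ steps a composition $D$ of doubly stochastic matrices with $Dx = y$.

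For $(3) \Rightarrow (4)$, I would invoke Birkhoff's theorem to write $D = \sum_\alpha \lambda_\alpha P_\alpha$ as a convex combination of permutation matrices, and then use convexity of $u \mapsto |u - t|$ together with permutation-invariance of the sum to obtain
\begin{align}
\sum_i |y_i - t| &= \sum_i \left| \sum_\alpha \lambda_\alpha (P_\alpha x)_i - t \right| \\
&\leq \sum_\alpha \lambda_\alpha \sum_i |(P_\alpha x)_i - t| = \sum_i |x_i - t|
\end{align}
for every $t \in \mathbb{R}$, which is exactly item 4.

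For $(4) \Rightarrow (2)$, I would pass to positive parts via the identity $(a)_+ := \max\{a, 0\} = (a + |a|)/2$. Since $\sum_i x_i = \sum_i y_i = 1$, condition 4 is equivalent to the pointwise inequality $\sum_i (x_i - t)_+ \geq \sum_i (y_i - t)_+$ for all $t \in \mathbb{R}$. The key tool is the variational identity
\begin{equation}
\sum_{\ell=1}^n z_\ell^\downarrow = \min_{t \in \mathbb{R}} \left[ nt + \sum_i (z_i - t)_+ \right],
\end{equation}
valid for any distribution $z$ and each $n \in \{1, \ldots, k\}$ and attained at $t = z_n^\downarrow$; one verifies this by noting that the right-hand side is a continuous piecewise-linear function of $t$ whose slope increases from $n - k$ to $n$ as $t$ traverses the support of $z$, first becoming non-negative at $t = z_n^\downarrow$. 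Pointwise comparison and infimization then give the partial-sum inequalities for $n < k$, while $n = k$ is the normalization.

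The main obstacle I anticipate is the T-transform step in $(2) \Rightarrow (3)$: although each individual T-transform is elementary, one must carefully verify that majorization of $y$ is preserved at every step and that the number of coordinates of agreement with $y$ strictly increases, so that the procedure terminates in finitely many steps. The other two implications reduce to a standard convexity computation and to establishing the variational representation of partial sums, both of which are routine in comparison.
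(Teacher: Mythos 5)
The paper does not prove this lemma at all: it is stated as a classical result and cited to Hardy--Littlewood--P\'olya and Marshall--Olkin, so there is no in-paper argument to compare against. Your proof is the standard textbook one and the cycle $(1)\Leftrightarrow(2)\Rightarrow(3)\Rightarrow(4)\Rightarrow(2)$ is logically sound: $(1)\Leftrightarrow(2)$ is the paper's definition of $\succ$; the Birkhoff-plus-convexity step for $(3)\Rightarrow(4)$ is correct (and could even be done without Birkhoff, directly from $\sum_i |\sum_j D_{ij}(x_j-t)| \leq \sum_j |x_j - t|$ using the row and column sums of $D$); and the variational identity $\sum_{\ell=1}^n z_\ell^\downarrow = \min_t\bigl[nt + \sum_i (z_i-t)_+\bigr]$, together with the reduction of condition 4 to positive parts via $\sum_i(x_i-t) = \sum_i(y_i-t) = 1-kt$, correctly yields $(4)\Rightarrow(2)$. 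The one place where your write-up is looser than the standard treatment is exactly the spot you flag: in the $(2)\Rightarrow(3)$ iteration, the T-transform output $x'$ need not be sorted (e.g.\ $x=(0.5,0.5,0)$, $y=(0.4,0.4,0.2)$ gives $x'=(0.4,0.5,0.1)$), and ``re-sorting'' between steps can in principle reshuffle which coordinates agree with $y$, so ``agrees in at least one more coordinate'' is not a monotone quantity under the interleaved permutations as literally stated. The standard fix (Marshall--Olkin, Lemma 2.B.1) is to induct on the number of nonzero entries of $x-y$ with the index selection carried out directly on the possibly unsorted vector, verifying the sorted partial-sum inequalities for $x'$ as you did (unsorted partial sums lower-bound sorted ones, so your slack computation suffices). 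With that bookkeeping made precise, the proof is complete and correct.
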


\begin{lemma}[Embedding of distribution with rational entries \cite{Brand_o_2015}]
\label{lem:embedRational}
Let  $\gamma_d$ be a probability distribution defined as follows:
\begin{equation}
\gamma_d := \left( \frac{d_1}{N},\ldots, \frac{d_k}{N} \right),
\end{equation}
where $d:= \{d_i\}_{i=1}^k$ is a set of natural numbers such that
\begin{equation}
N := \sum\limits_{i=1}^k d_i .    
\end{equation}
 Then the action of the embedding channel $\Gamma_d$ on $\gamma_d$ is as follows:
\begin{equation}
\Gamma_d(\gamma_d) = \eta_N,
\end{equation}
where $\eta_N$ is the uniform distribution of size $N$.
\end{lemma}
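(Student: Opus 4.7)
The plan is to proceed by direct unpacking of definitions: substitute the specific distribution $\gamma_d = (d_1/N, \ldots, d_k/N)$ into the definition of $\Gamma_d$ given in Subsection~\ref{subsec:embed} and show that every resulting entry equals $1/N$, with the total number of entries equal to $N$. Since the embedding channel decomposes as a direct sum $\Gamma_d(p) = \bigoplus_{i=1}^k p_i \eta_i$ where $\eta_i$ is uniform on $d_i$ outcomes, the $i$-th block of $\Gamma_d(\gamma_d)$ consists of $d_i$ copies of the value $(d_i/N)/d_i = 1/N$. Concatenating the $k$ blocks therefore produces a vector of length $\sum_i d_i = N$ in which every component is $1/N$, which is precisely $\eta_N$.

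Because the lemma is essentially a sanity check on the embedding construction, there is no real obstacle; the only care needed is to confirm that the construction is well-defined, which follows immediately from the assumption that each $d_i$ is a natural number (so division by $d_i$ is permissible) and from $N = \sum_i d_i$ ensuring normalization. I would write the proof as a single display equation chain of the form
\begin{align}
\Gamma_d(\gamma_d) &= \bigoplus_{i=1}^k \frac{d_i}{N}\, \eta_i \\
&= \left( \underbrace{\tfrac{1}{N}, \ldots, \tfrac{1}{N}}_{d_1}, \ldots, \underbrace{\tfrac{1}{N}, \ldots, \tfrac{1}{N}}_{d_k} \right) \\
&= \eta_N,
\end{align}
followed by a brief remark that the last equality uses $\sum_i d_i = N$. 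No further machinery, such as data processing or the left-inverse property~\eqref{eq:left-inv-ch-embed}, is required for this statement.
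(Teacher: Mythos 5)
Your proof is correct and follows essentially the same route as the paper's: both substitute $\gamma_d$ directly into the definition of $\Gamma_d$, observe that each of the $d_i$ entries in the $i$-th block equals $(d_i/N)/d_i = 1/N$, and conclude by noting that the blocks concatenate to a vector of length $N$. No discrepancies.
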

\begin{proof}
From the definition of an embedding channel $\Gamma_d$ corresponding to the same set $d$ (see Subsection \ref{subsec:embed}), we find that
\begin{align}
\Gamma_d(\gamma_d) &= \left( \underbrace{\frac{d_1/N}{d_1}, \ldots, \frac{d_1/N}{d_1}}_{d_1}, \ldots, \underbrace{\frac{d_k/N}{d_k}, \ldots, \frac{d_k/N}{d_k}}_{d_k} \right), \nonumber \\
&= \left( \underbrace{1/N, \ldots, 1/N}_{d_1}, \ldots, \underbrace{1/N, \ldots, 1/N}_{d_k} \right), \nonumber \\
&= \left( \underbrace{1/N, \ldots, 1/N}_{N}\right), \nonumber \\
&= \eta_N.
\end{align}
This concludes the proof.
\end{proof}

\begin{lemma}[Preservation of \RD \cite{Brand_o_2015}]
\label{lem:PreserveRenyi}
Let $p := \{p_i\}_{i=1}^k$ be a probability distribution and let $d:= \{d_i\}_{i=1}^k$ be a set of natural numbers with $N:=\sum\limits_{i=1}^k d_i$. Let $\tilde{p}$ denote the distribution $\tilde{p} := \Gamma_d(p)$. Then for all $\alpha \in [-\infty, \infty]$, the following equality holds
\begin{equation}
\D{\alpha}{p}{\gamma_d} = \D{\alpha}{\tilde{p}}{\eta_N},
\end{equation}
where $\gamma_d$ is as defined in Lemma \ref{lem:embedRational} and $\Gamma_d$ is as defined in Subsection \ref{subsec:embed}.
\end{lemma}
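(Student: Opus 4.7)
My plan is to proceed by direct computation for generic $\alpha$ and then handle the boundary values $\alpha \in \{0, 1, \pm\infty\}$ separately, with a backup strategy via data processing that I will mention below.

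For $\alpha \in \mathbb{R} \setminus \{0, 1\}$, I would start from the definition
\begin{equation}
\D{\alpha}{\tilde p}{\eta_N} = \frac{\sgn{\alpha}}{\alpha - 1} \ln \sum_{j=1}^N \tilde p_j^{\,\alpha} \left(\tfrac{1}{N}\right)^{1-\alpha},
\end{equation}
and split the sum over $j$ into the $k$ blocks dictated by $\Gamma_d$. Since the $i$-th block of $\tilde p$ consists of $d_i$ copies of $p_i/d_i$, that block contributes $d_i \cdot (p_i/d_i)^\alpha \cdot N^{\alpha-1}$. The key algebraic step is then
\begin{equation}
d_i \left(\tfrac{p_i}{d_i}\right)^\alpha N^{\alpha-1} = p_i^\alpha \, d_i^{\,1-\alpha} N^{\alpha-1} = p_i^\alpha \left(\tfrac{d_i}{N}\right)^{1-\alpha},
\end{equation}
which is precisely the $i$-th summand in the definition of $\D{\alpha}{p}{\gamma_d}$. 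Summing over $i$ and applying the prefactor $\sgn{\alpha}/(\alpha-1)$ gives $\D{\alpha}{\tilde p}{\eta_N} = \D{\alpha}{p}{\gamma_d}$.

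For the boundary values, the cleanest route is direct evaluation: at $\alpha = 0$, $\D{0}{\tilde p}{\eta_N} = -\ln \sum_{j:\tilde p_j \neq 0}(1/N) = -\ln(\sum_{i : p_i \neq 0} d_i/N) = -\ln \sum_{i:p_i \neq 0} \gamma_d(i) = \D{0}{p}{\gamma_d}$; at $\alpha = 1$, the block structure gives $\sum_j \tilde p_j \ln(\tilde p_j N) = \sum_i p_i \ln((p_i/d_i)N) = \sum_i p_i \ln(p_i/\gamma_d(i))$; at $\alpha = \infty$, $\max_j \tilde p_j / (1/N) = \max_i (p_i/d_i) N = \max_i p_i/\gamma_d(i)$; and $\alpha = -\infty$ follows from the $\alpha = \infty$ case via the identity $\D{-\infty}{p}{q} = \D{\infty}{q}{p}$. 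Each evaluation matches the corresponding limit for $\D{\alpha}{p}{\gamma_d}$.

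As an alternative unified argument, one can invoke data processing: $\Gamma_d$ is a classical channel with $\Gamma_d(\gamma_d) = \eta_N$ (by Lemma~\ref{lem:embedRational}), so $\D{\alpha}{p}{\gamma_d} \geq \D{\alpha}{\tilde p}{\eta_N}$; and the left inverse $\Gamma_d^*$ is also a classical channel satisfying $\Gamma_d^*(\tilde p) = p$ by \eqref{eq:left-inv-ch-embed} and $\Gamma_d^*(\eta_N) = \gamma_d$ (immediate from \eqref{PsInvEmbed}, since summing $d_i$ copies of $1/N$ yields $d_i/N$), so the reverse inequality $\D{\alpha}{\tilde p}{\eta_N} \geq \D{\alpha}{p}{\gamma_d}$ also holds; combining gives equality. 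The only subtlety here is ensuring data processing is available for all $\alpha \in [-\infty,\infty]$, which is the version recorded in \eqref{RenDivDatProc}, so this route is also airtight. I expect no real obstacle in either approach; the direct computation is essentially a one-line verification once the block structure of $\tilde p$ is written out, and the only mild nuisance is bookkeeping the $\alpha \in \{0,1,\pm\infty\}$ cases.
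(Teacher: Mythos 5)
Your proposal is correct, and in fact it contains two complete proofs. The paper's own proof is exactly your ``alternative unified argument'': it applies the data-processing inequality \eqref{RenDivDatProc} once with the channel $\Gamma_d$ (using $\Gamma_d(\gamma_d)=\eta_N$ from Lemma~\ref{lem:embedRational}) and once with the left inverse $\Gamma_d^*$ (using \eqref{eq:left-inv-ch-embed}), and combines the two inequalities into an equality, valid uniformly for all $\alpha\in[-\infty,\infty]$. Your primary route---the direct block-by-block computation showing $d_i(p_i/d_i)^\alpha N^{\alpha-1}=p_i^\alpha(d_i/N)^{1-\alpha}$, with separate evaluation at $\alpha\in\{0,1,\pm\infty\}$---is a genuinely different and more elementary argument: it buys an explicit verification that needs no monotonicity theorem for the \RD (which for negative and infinite $\alpha$ is itself a nontrivial input), at the cost of some case bookkeeping and of having to adopt the usual convention that both sides equal $+\infty$ when $\alpha<0$ and $p$ is not full rank. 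The data-processing route, which the paper chooses, is shorter and handles all $\alpha$ at once, but leans on \eqref{RenDivDatProc} holding on the whole extended range. Both are sound; either would serve.
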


\begin{proof}
This is a consequence of the data-processing inequality in \eqref{RenDivDatProc} holding for all $\alpha \in [-\infty, \infty]$, as well as the identity in \eqref{eq:left-inv-ch-embed}. Indeed, for $\alpha \in [-\infty, \infty]$, the following inequality holds as a consequence of data processing:
\begin{equation}
    \D{\alpha}{p}{\gamma_d} \geq \D{\alpha}{\Gamma_d(p)}{\Gamma_d(\gamma_d)} = \D{\alpha}{\tilde{p}}{\eta_N}.
\end{equation}
Additionally, the following inequality is a consequence of data processing and the identity in \eqref{eq:left-inv-ch-embed}:
\begin{align}
    \D{\alpha}{\tilde{p}}{\eta_N}
    & = \D{\alpha}{\Gamma_d(p)}{\Gamma_d(\gamma_d)}\\
    & \geq \D{\alpha}{(\Gamma_d^* \circ \Gamma_d)(p)}{(\Gamma_d^* \circ \Gamma_d)(\gamma_d)} \\
    & =
    \D{\alpha}{p}{\gamma_d} .
\end{align}
This concludes the proof.
\end{proof}

\begin{lemma}[Mixture reduces \RD \cite{Brand_o_2015}]
\label{lem:MixRenyi}
Let $p$ and $q$ be  probability distributions such that $p \neq q$ and  $q$ is full rank. Then $\forall \alpha \in (-\infty, \infty) \setminus \{0\}$ and $0<\delta<1$,
\begin{equation}
\D{\alpha}{(1-\delta)p + \delta q}{q} < \D{\alpha}{p}{q}.
\end{equation}
\end{lemma}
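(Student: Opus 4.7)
The plan is to leverage the pointwise strict convexity (respectively concavity) of $x\mapsto x^\alpha$ on $(0,\infty)$, together with the nonnegativity of the \RD recorded in \eqref{RenDivGreater0}. Since $q$ is full rank, $q_i>0$ for every $i$, so $(1-\delta)p+\delta q$ is strictly positive in every coordinate and $\D{\alpha}{(1-\delta)p+\delta q}{q}$ is finite. Because $p\neq q$, there is at least one index $i_0$ with $p_{i_0}\neq q_{i_0}$, which will supply the strict inequality in the sum. I handle the cases $\alpha\in(-\infty,0)\cup(1,\infty)$, $\alpha\in(0,1)$, and $\alpha=1$ separately.

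For $\alpha>1$ or $\alpha<0$, the function $x\mapsto x^\alpha$ is strictly convex on $(0,\infty)$, hence for every $i$,
\begin{equation}
((1-\delta)p_i+\delta q_i)^\alpha q_i^{1-\alpha} \leq (1-\delta)\, p_i^\alpha q_i^{1-\alpha}+\delta\, q_i,
\end{equation}
with strict inequality whenever $p_i\neq q_i$. Summing over $i$ and invoking the strict inequality at $i_0$ gives
\begin{equation}
\sum_i((1-\delta)p_i+\delta q_i)^\alpha q_i^{1-\alpha} < (1-\delta)\sum_i p_i^\alpha q_i^{1-\alpha}+\delta.
\end{equation}
The prefactor $\sgn{\alpha}/(\alpha-1)$ is positive in both subregimes, so the nonnegativity of $\D{\alpha}{p}{q}$ forces $\sum_i p_i^\alpha q_i^{1-\alpha}\geq 1$. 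Thus the right-hand side is bounded above by $\sum_i p_i^\alpha q_i^{1-\alpha}$, and applying $\ln$ and multiplying by the positive prefactor yields $\D{\alpha}{(1-\delta)p+\delta q}{q}<\D{\alpha}{p}{q}$.

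For $0<\alpha<1$, $x\mapsto x^\alpha$ is strictly concave on $(0,\infty)$ and every pointwise and aggregate inequality above reverses, producing
\begin{equation}
\sum_i((1-\delta)p_i+\delta q_i)^\alpha q_i^{1-\alpha} > (1-\delta)\sum_i p_i^\alpha q_i^{1-\alpha}+\delta \geq \sum_i p_i^\alpha q_i^{1-\alpha},
\end{equation}
where the second inequality uses $\sum_i p_i^\alpha q_i^{1-\alpha}\leq 1$, itself a consequence of $\D{\alpha}{p}{q}\geq 0$ and the now-negative prefactor $1/(\alpha-1)$. Multiplying the logarithm by this negative prefactor flips the direction and again gives the strict inequality. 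The remaining case $\alpha=1$ follows from strict convexity of $x\mapsto x\ln x$: each summand of $\D{1}{p}{q}=\sum_i p_i\ln(p_i/q_i)$ is strictly convex in $p_i$, so $p\mapsto\D{1}{p}{q}$ is strictly convex, whence $\D{1}{(1-\delta)p+\delta q}{q}<(1-\delta)\D{1}{p}{q}<\D{1}{p}{q}$, using $\D{1}{q}{q}=0$ and $\D{1}{p}{q}>0$.

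The main bookkeeping obstacle is tracking the sign of $\sgn{\alpha}/(\alpha-1)$ across the three regimes, because it simultaneously controls whether $\sum_i p_i^\alpha q_i^{1-\alpha}$ is $\geq 1$ or $\leq 1$ and the direction in which strictness propagates through the logarithm. A peripheral issue arises for $\alpha<0$ when $p$ fails to be full rank: in that case $\D{\alpha}{p}{q}=+\infty$, while the mixture $(1-\delta)p+\delta q$ is full rank so the left-hand side stays finite and the claim is immediate; modulo this observation the argument reduces to the strictly convex case above.
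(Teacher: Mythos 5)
Your proof is correct, and it reaches the result by a somewhat different and more uniform route than the paper's. For $\alpha>1$ the two arguments essentially coincide: the paper also derives convexity of $Q_\alpha(p)=\sum_i p_i^\alpha q_i^{1-\alpha}$ from convexity of $x\mapsto x^\alpha$ and obtains strictness from $Q_\alpha(p)>Q_\alpha(q)=1$, whereas you obtain it from pointwise strict convexity at an index with $p_{i_0}\neq q_{i_0}$; either source of strictness suffices. For $\alpha\in(0,1]$ the paper invokes joint convexity of $D_\alpha$ as a cited black box and sets three of its four arguments equal to $q$, while you reprove what is needed from scratch via strict concavity of $x^\alpha$ on $(0,1)$ and strict convexity of $x\ln x$ at $\alpha=1$; the price is the bookkeeping you acknowledge, namely that the sign of $\sgn{\alpha}/(\alpha-1)$ simultaneously fixes whether $\sum_i p_i^\alpha q_i^{1-\alpha}$ is $\geq 1$ or $\leq 1$ and whether the logarithmic step preserves or reverses the inequality, and you track this correctly. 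For $\alpha<0$ the paper reduces to the $\alpha>1$ case through the identity $\D{\alpha}{p}{q}=\frac{|\alpha|}{|\alpha|+1}\D{|\alpha|+1}{q}{p}$, which swaps the arguments and puts the mixture in the second slot, making that reduction slightly delicate; your observation that $x\mapsto x^\alpha$ is already strictly convex for $\alpha<0$ lets you fold this regime into the $\alpha>1$ computation, which is arguably cleaner, and you handle the non-full-rank $p$ subcase by the same infinity argument as the paper. The only point worth making explicit in a final write-up is that for $\alpha\in(0,1)\cup(1,\infty)$ and an index with $p_{i_0}=0<q_{i_0}$ the strict pointwise inequality survives at the boundary of $[0,\infty)$ because $\delta^\alpha\neq\delta$, so the appeal to strict convexity or concavity on the open half-line is harmless.
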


\begin{proof}
We split the proof into separate parts for different $\alpha$ values.

For $\alpha \in (0, 1]$, the following inequality is a consequence of joint convexity of $D_\alpha$ for probability distributions $(A_1, B_1)$ and $(A_2, B_2)$ and a parameter $0<\delta<1$, \cite{van_Erven_2014}:
\begin{multline}
     \D{\alpha}{(1-\delta)A_1 + \delta A_2}{(1-\delta)B_1 + \delta B_2}  \\
	  \leq (1-\delta)\D{\alpha}{A_1}{B_1} + \delta \D{\alpha}{A_2}{B_2}.
\end{multline}
Setting $B_1 = B_2 = A_2 = q$ and $A_1 = p$, we find that
\begin{align}
    \D{\alpha}{(1-\delta)p + \delta q}{q} &\leq (1-\delta)\D{\alpha}{p}{q} \nonumber \\
	&< \D{\alpha}{p}{q},
\end{align}
where the first inequality follows because $\D{\alpha}{q}{q} = 0$ (see \eqref{RenDivEqual0}) and the second strict inequality follows because $\D{\alpha}{p}{q} > 0$ for $p \neq q$ (see \eqref{RenDivGreater0}).

For $\alpha > 1$, joint convexity does not hold, and so a different approach is required. Let us define $r := \left(1-\delta\right)p + \delta q$. Then the statement of the lemma is equivalent to the following:
\begin{equation}
    \label{Mixture-equivalent}
	\sum\limits_{i=1}^k r_i^\alpha q_i^{1-\alpha} < \sum\limits_{i=1}^k p_i^\alpha q_i^{1-\alpha}.
\end{equation}
For $\alpha>1$, $f(x) = x^\alpha$ is a convex function over all $x>0$. Since $q_i^{1-\alpha}>0$, the following function
\begin{equation}
	Q_{\alpha}(p) := \sum\limits_{i=1}^k p_i^\alpha q_i^{1-\alpha}
\end{equation}
is convex with respect to $p$. For $p \neq q$, $Q_{\alpha}(p)>1$, which follows from strict positivity of the \RD $\D{\alpha}{p}{q}$. Note that $Q_{\alpha}(q) = \sum\limits_{i=1}^k q_i = 1$. Thus,
\begin{align}
	Q_{\alpha}(r) &\leq (1-\delta)Q_{\alpha}(p) + \delta Q_{\alpha}(q), \nonumber \\
	&= Q_{\alpha}(p) - \delta(Q_{\alpha}(p) - Q_{\alpha}(q)), \nonumber \\
	&< Q_{\alpha}(p),
\end{align}
where the first inequality follows from convexity and the second inequality follows because $Q_{\alpha}(p) > Q_{\alpha}(q)$. Thus, we have established \eqref{Mixture-equivalent}.
	
For $\alpha < 0$, if $p$ is not full rank, then $\D{\alpha}{p}{q} = \infty$. However, $q$ is full rank, implying that $\D{\alpha}{(1-\delta)p + \delta q}{q}$ is finite. Thus, the desired inequality holds. If $p$ is full rank, we use a similar approach as that used in the case  $\alpha>1$, by appealing to the identity in \eqref{eq:neg-alpha-id-renyi-div}, thus concluding the proof.
\end{proof}

\begin{lemma}
\label{lem:rewrite-TD}
Let $r$ and $s$ be probability distributions over the same alphabet. Then
\begin{equation}
    \frac{1}{2} \left\Vert r - s \right\Vert_1 = \sum_{i : r_i > s_i} r_i - s_i .
\end{equation}
\end{lemma}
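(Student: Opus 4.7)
The plan is to split the $\ell_1$ sum into its positive and negative contributions and use the fact that $r$ and $s$ are normalized to show that these two contributions are equal.

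First, I would partition the index set into $A := \{i : r_i > s_i\}$, $B := \{i : r_i < s_i\}$, and $C := \{i : r_i = s_i\}$. The terms in $C$ contribute zero to $\Vert r - s \Vert_1$, so
\begin{equation}
\left\Vert r - s \right\Vert_1 = \sum_{i \in A} (r_i - s_i) + \sum_{i \in B} (s_i - r_i).
\end{equation}

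Next, I would exploit normalization: since $\sum_i r_i = \sum_i s_i = 1$, we have $\sum_i (r_i - s_i) = 0$. Splitting this sum according to the same partition (noting again that the $C$ terms vanish) yields
\begin{equation}
\sum_{i \in A}(r_i - s_i) - \sum_{i \in B}(s_i - r_i) = 0,
\end{equation}
i.e.\ the positive and negative parts are equal. Substituting this back into the previous display gives $\Vert r - s \Vert_1 = 2 \sum_{i \in A}(r_i - s_i)$, and dividing by $2$ yields the claim. No step looks to be a real obstacle here; this is essentially the standard decomposition of total variation distance, and the only thing to be careful about is handling the $r_i = s_i$ indices (which trivially drop out on both sides).
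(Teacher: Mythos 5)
Your proof is correct and is essentially the same argument as the paper's: both decompose the $\ell_1$ norm into positive and negative parts and use normalization to show these parts are equal (the paper merges your sets $B$ and $C$ into $\{i : r_i \leq s_i\}$, which changes nothing since the $r_i = s_i$ terms vanish).
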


\begin{proof}
Consider that
\begin{equation}
    1 = \sum_{i : r_i > s_i} r_i + \sum_{i : r_i \leq s_i} r_i = 
    \sum_{i : r_i > s_i} s_i + \sum_{i : r_i \leq s_i} s_i.
\end{equation}
This implies that
\begin{equation}
    \sum_{i : r_i > s_i} r_i - s_i = 
    \sum_{i : r_i \leq s_i} s_i - r_i.
\end{equation}
Then we find that
\begin{align}
     \left\Vert r - s \right\Vert_1  & =  \sum_i | r_i - s_i | \\
    & =  \left[\sum_{i : r_i > s_i} r_i - s_i \right] + \left[\sum_{i : r_i \leq s_i} s_i - r_i  \right] \\
    & = 2 \sum_{i : r_i > s_i} r_i - s_i.
\end{align}
This concludes the proof.
\end{proof}

\begin{definition}[Reversal channel] \label{rmrk:Reversal} As is well known, a classical channel $E$ is defined in terms of a conditional probability distribution. This implies that, given an input distribution $p(x)$ and a channel $E(y|x)$, the output distribution $p'(y)$ is given by
\begin{equation}
    \label{channelDef}
    p'(y) = \sum_x E(y|x) p(x).
\end{equation}
A reversal channel for  the classical channel $E$ on $p$ can be defined as follows. From the Bayes theorem, we know that
\begin{equation}
    \label{Bayes}
    E(y|x) p(x) = E'(x|y) p'(y).
\end{equation}
Summing both sides of  \eqref{Bayes} over $x$ and noticing that $\sum_x E'(x|y) = 1$ is the normalization condition, we recover \eqref{channelDef}.
Similarly, summing both sides of  \eqref{Bayes} over $y$, we see that
\begin{equation}
    p(x) = \sum_y E'(x|y) p'(y),
\end{equation}
and thus, a reversal channel $E'$ can be defined as
\begin{equation}
    E'(x|y) = \frac{E(y|x) p(x)}{p'(y)} .
\end{equation}
\end{definition}

Eqs.~\eqref{eq:q-rat-close}, \eqref{eq:perfect-for-q}, and \eqref{eq:approx-for-p} of the following lemma were established in \cite{Brand_o_2015}, and the proof given below follows the proof given there closely.

\begin{lemma}[Conversion into rational entries] 
\label{lem:ConvertRational}
Let $q = \{q_i\}_{i=1}^k$ be an ordered (descending) probability distribution of full rank. Then, for all $\varepsilon>0$, there exists a probability distribution $\tilde{q}$ such that 
	 \begin{equation}
	     \frac{1}{2} \left \Vert q - \tilde{q} \right\Vert_1 \leq \varepsilon,
	     \label{eq:q-rat-close}
	 \end{equation}
where $\tilde{q} = \{d_i/N\}_i$, for a set $\{d_i\}_{i=1}^k$ of natural numbers, with $N := \sum\limits_{i=1}^k d_i$. Additionally, there exists a classical channel $E$ such that
\begin{equation}
\tilde{q} = E(q),  
\label{eq:perfect-for-q}
\end{equation}
 and for all other probability distributions $p$, the following inequalities hold
 \begin{align}
     \frac{1}{2} \left\Vert p - E(p) \right\Vert_1
     & \leq O(\sqrt{\varepsilon}),
     \label{eq:approx-for-p} \\
     \frac{1}{2} \left\Vert p - R(p) \right\Vert_1
     & \leq O(\sqrt{\varepsilon}),
     \label{eq:approx-for-p-reversal-ch}
 \end{align}
 where $R$ is the reversal channel for $E$ on $q$, as given in Definition~\ref{rmrk:Reversal}.
\end{lemma}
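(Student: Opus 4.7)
The plan is to (i) approximate $q$ by a nearby rational distribution $\tilde{q}$ with common denominator $N$, (ii) build a classical channel $E$ that is a small convex perturbation of the identity and sends $q$ exactly to $\tilde{q}$, and (iii) compute the reversal channel $R$ explicitly by Bayes' rule and bound its effect on an arbitrary $p$. Since $q$ is fixed and only $\varepsilon$ varies, any construction that scales as $O(\varepsilon)$ with a $q$-dependent constant will verify the $O(\sqrt{\varepsilon})$ bounds required by \eqref{eq:approx-for-p} and \eqref{eq:approx-for-p-reversal-ch}.

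For the first step, pick an integer $N \geq \max\{(k-1)/\varepsilon,\, 2/q_{\min}\}$ and set $d_i := \lfloor N q_i \rfloor$ for $i<k$ and $d_k := N - \sum_{i<k} d_i$, with $\tilde{q}_i := d_i/N$. Standard estimates give $N q_k \leq d_k \leq N q_k + (k-1)$, so each $d_i$ is a natural number and $|q_i - \tilde{q}_i| \leq (k-1)/N$. Lemma \ref{lem:rewrite-TD} then yields $\tfrac{1}{2}\|q - \tilde{q}\|_1 \leq (k-1)/N \leq \varepsilon$, which is \eqref{eq:q-rat-close}.

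For the second step, define $\gamma := \max_i (q_i - \tilde{q}_i)_+/q_i$, which is $O\bigl(1/(N q_{\min})\bigr)$ and in particular $O(\varepsilon/q_{\min}^2)$. The vector $\nu := q + (\tilde{q} - q)/\gamma$ is then a probability distribution (nonnegative by the choice of $\gamma$, and normalized because $\sum_i(\tilde{q}_i - q_i) = 0$). Take the channel with $E(j|i) := (1-\gamma)\delta_{ij} + \gamma\, \nu_j$, so that $E(q) = (1-\gamma) q + \gamma \nu = \tilde{q}$, establishing \eqref{eq:perfect-for-q}. For any distribution $p$ we have $E(p) - p = \gamma(\nu - p)$, hence $\tfrac{1}{2}\|p - E(p)\|_1 \leq \gamma = O(\varepsilon)$, which implies \eqref{eq:approx-for-p}.

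For the third step, Definition \ref{rmrk:Reversal} gives $R(i|j) = [(1-\gamma)\delta_{ij} + \gamma \nu_j]\, q_i/\tilde{q}_j$, which is a valid channel since $E(q) = \tilde{q}$. Substituting and collecting terms,
\begin{equation}
R(p)_i - p_i = p_i\!\left[\frac{(1-\gamma)q_i}{\tilde{q}_i} - 1\right] + \gamma\, q_i \sum_j \frac{\nu_j p_j}{\tilde{q}_j}.
\end{equation}
Because $N$ is large and $q_{\min}>0$, one has $\tilde{q}_i \geq q_{\min}/2$, so $|q_i/\tilde{q}_i - 1| \leq |q_i - \tilde{q}_i|/\tilde{q}_i = O(\varepsilon/q_{\min})$ and $\sum_j \nu_j p_j/\tilde{q}_j \leq 2/q_{\min}$; summing over $i$ gives $\tfrac{1}{2}\|R(p) - p\|_1 = O(\varepsilon/q_{\min}^2)$, which is \eqref{eq:approx-for-p-reversal-ch}. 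The main bookkeeping obstacle is precisely this reversal step: the factors $1/\tilde{q}_j$ entering the Bayes inversion threaten to blow up, and the key idea is that the choice $\gamma \geq \max_i (q_i - \tilde{q}_i)_+/q_i$ (which was forced by requiring $\nu \geq 0$) is exactly what keeps $\gamma/\tilde{q}_j$ under control and allows $E$ to be simultaneously close to the identity and to map $q$ exactly to $\tilde{q}$.
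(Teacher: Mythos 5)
Your proof is correct, but the construction is genuinely different from the paper's. The paper rounds \emph{up} ($d_i=\ceil{q_iN}$ for $i<k$), so that only the last entry loses mass, and then builds a ``targeted'' channel that fixes every letter $i<k$ and redistributes mass only from letter $k$ to the others; the reversal channel is computed entrywise from that sparse structure. You instead round down and use a depolarizing-type map $E = (1-\gamma)\,\mathrm{id} + \gamma\,(\text{replace by }\nu)$, with $\nu$ reverse-engineered so that the mixture lands exactly on $\tilde q$ and with $\gamma=\max_i (q_i-\tilde q_i)_+/q_i$ exactly the smallest mixing weight for which $\nu\geq 0$. Both routes are valid, and your Bayes computation for $R$ and the bounds $\tilde q_j\geq q_{\min}/2$, $\gamma=O(1/(Nq_{\min}))$ are sound. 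The trade-off: the paper ties $N$ to $q_k$ via $N\geq((k+1)/q_k)^2$ and gets the clean bound $\sqrt{\varepsilon/k}$ with a constant depending only on $k$, whereas your bounds are $O(\varepsilon/q_{\min}^2)$ --- linear in $\varepsilon$, hence quantitatively stronger in $\varepsilon$, but with a $q$-dependent constant. As you note, this is harmless for the way the lemma is used (in Theorem~\ref{Theorem2}, $q$ and $q'$ are fixed and $\varepsilon_1,\varepsilon_2\to0$), and it still verifies the stated $O(\sqrt{\varepsilon})$. Two small points to tidy up: (i) handle the degenerate case $\gamma=0$, which occurs precisely when $\tilde q=q$ (then $\nu$ is undefined and one simply takes $E=R=\mathrm{id}$, as the paper does for rational $q$); (ii) the paper's Remark~\ref{rmrk:FullRank} is argued from the specific form of its channel $E$, so if your $E$ were substituted one should note that full-rank preservation still holds, indeed immediately, since $E(p)\geq(1-\gamma)p$ entrywise.
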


\begin{proof}
To begin with, if $q$ is already rational, then the lemma trivially holds with $\tilde{q} = q$ and $E$ and $R$ set to the identity channel. 

So we move on to the non-trivial case in which $q$ is not rational. We first construct a probability distribution $\tilde{q}$ that satisfies \eqref{eq:q-rat-close}. Then we construct a classical channel $E$ that satisfies \eqref{eq:perfect-for-q} and \eqref{eq:approx-for-p}. We finally construct the reversal channel $R$ that satisfies \eqref{eq:approx-for-p-reversal-ch}.

Fix $\varepsilon > 0 $. Since $q$ is decreasing, it follows that $\min_i q_i = q_k$. 
 Pick
\begin{equation}
    N \geq \max\left\{\left(\frac{k+1}{q_k}\right)^2, \frac{k}{\varepsilon}, 4\right\}.
    \label{eq:choice-of-N-suff-large}
\end{equation}
Then $N$ is sufficiently large such that the following inequality holds
\begin{equation}
\label{L5-NChoice}
	q_k > \frac{k}{\sqrt{N}} > \frac{k}{N}.
\end{equation}

We now define $\tilde{q}$. For $i \in \{ 1, \ldots, k-1\}$,
set 
\begin{align}
    d_i & := \ceil{q_i N}, \\
\label{L5-qtildei}
	\tilde{q}_i & := \frac{d_i}{N},
\end{align}
and
\begin{align}
d_k & := N - \sum_{i=1}^{k-1} d_i, \\
\label{L5-qtildek}
    \tilde{q}_k & = \frac{d_k}{N} = 1- \sum\limits_{i=1}^{k-1} \tilde{q}_i  .
\end{align}
Note that $\tilde{q}_i \geq q_i$ for $i \in \{ 1, \ldots, k-1\}$  and $\tilde{q}_k \leq q_k$ because both distributions are normalized. In fact, we can conclude that
$\tilde{q}_k < q_k$ from the assumption that $q$ is not rational. Also note that, 
\begin{align}
1-\tilde{q}_k &= \sum\limits_{i=1}^{k-1} \tilde{q}_i  \nonumber \\
&= \sum\limits_{i=1}^{k-1} \frac{\ceil{q_i N}}{N}  \nonumber \\
&\leq  \sum\limits_{i=1}^{k-1} \frac{q_i N + 1}{N}  \nonumber \\
&= \left(\sum\limits_{i=1}^{k-1} \frac{q_i N}{N}\right) + \frac{k-1}{N} \nonumber \\
&\leq 1-q_k + \frac{k}{N},
\end{align}
where the first equality follows from \eqref{L5-qtildek} and the second equality follows from \eqref{L5-qtildei}.
Thus,
\begin{align}
\label{L5-qtildek>0}
	\tilde{q}_k \geq q_k - \frac{k}{N} &> 0,
\end{align}
which follows from \eqref{L5-NChoice} by the choice of $N$. Thus $\tilde{q}$ is a legitimate probability distribution of full rank. 
Now applying Lemma~\ref{lem:rewrite-TD}, we find that
\begin{align}
	\frac{1}{2} \left\Vert q - \tilde{q} \right\Vert_1  
	&= \sum\limits_{i : q_i > \tilde{q}_i} q_i - \tilde{q}_i, \nonumber \\
	&= q_k - \tilde{q}_k, \nonumber \\
	&\leq \frac{k}{N} \leq \varepsilon,
\label{q-tildeq}
\end{align}
where the last inequalities are due to \eqref{L5-qtildek>0} and \eqref{eq:choice-of-N-suff-large}.
Thus, it follows that \eqref{eq:q-rat-close} is satisfied.  

Now we construct a classical channel that takes $q$ to~$\tilde{q}$. Such a channel must slightly increase the probabilities $q_i$ for all  $i \in \{1, \ldots, k-1\}$, while reducing $q_k$. Recall that classical channels are characterised by conditional probabilities $P(j | i)$ satisfying
\begin{equation}
P(j | i) \geq 0\quad \forall i,j,\qquad 	\sum\limits_{j=1}^k P(j | i) = 1. \label{eq:cl-ch-conds}
\end{equation}

Set $\mathcal{I} := \{1, 2, \ldots, k-1 \}$, which is the set of  indices for which $\tilde{q}_i \geq q_i$. For $i \in \mathcal{I}$, let
\begin{align}
    \Delta_i & := \tilde{q}_i - q_i,\\
    \Delta & := \sum_{i \in \mathcal{I}} \Delta_i.
    \label{eq:delta-def-pf}
\end{align}
Note that $\tilde{q}_k = q_k - \Delta$. 

Consider a classical channel $E$ defined as follows:
\begin{equation}
	P(j | i) := 
	\begin{cases} 
	   \delta_{ij} &   i,j  \in \mathcal{I}\\
      	0 & i \in \mathcal{I}, j=k\\
      	\frac{\Delta_j}{q_k} & i=k, j \in \mathcal{I}\\
      	1- \frac{\Delta}{q_k} & i=j=k
   \end{cases}.
   \label{eq:cl-ch-rat-to-arb}
\end{equation}
It is clear that the non-negativity condition in \eqref{eq:cl-ch-conds} holds, and we now prove that 1) the normalization condition holds and 2) $E(q) = \tilde{q}$. We begin with the normalization condition.
For $i \in \mathcal{I}$,
we have that
		\begin{align}
			 \sum\limits_{j=1}^k P(j | i) 
			&= P(k | i) + P(i | i) + \sum\limits_{j \in \mathcal{I}, j\neq i} P(j | i)  , \nonumber \\
			&= 0 + 1 + 0, \nonumber \\
			&= 1.
		\end{align}
	 For $i = k$, we have that
		\begin{align}
			 \sum\limits_{j=1}^k P(j | k) 
			&= P(k | k) + \sum\limits_{j \in \mathcal{I}} P(j | k) , \nonumber \\
			&= 1- \frac{\Delta}{q_k} + \sum\limits_{j \in \mathcal{I}}  \frac{\Delta_j}{q_k}  ,  \nonumber \\
			&= 1.
		\end{align}
Thus, the following equality holds for all $i\in \{1, \ldots, k\}$:
\begin{equation}
	\sum\limits_{j=1}^k P(j | i) = 1,
\end{equation}
which corresponds to the preservation of normalization.
		
Next, we show that $E(q) = \tilde{q}$.  For $j \in \mathcal{I}$, we have that
		\begin{align}
			[E(q)]_j &= \sum\limits_{i =1}^k P(j | i)q_i \nonumber \\
			& =  P(j | j)q_j + P(j | k)q_k + \sum\limits_{i \in \mathcal{I}, i\neq j} P(j | i)q_i , \nonumber \\
			&= q_j + \frac{\Delta_j}{q_k}q_k + 0 , \nonumber \\
			&=  q_j + \Delta_j , \nonumber \\
			&= \tilde{q}_j.
		\end{align}
	For $j = k$, we have that
		\begin{align}
			[E(q)]_k & =  \sum\limits_{i =1}^k P(k | i)q_i \nonumber \\
			&= P(k | k)q_k + \sum\limits_{i \in \mathcal{I}} P(k | i)q_i , \nonumber \\
			&= \left(1-\frac{\Delta}{q_k} \right)q_k + 0 , \nonumber \\
			&= q_k - \Delta, \nonumber \\
			&= \tilde{q}_k.
		\end{align}
Thus, we have proven that $\tilde{q} = E(q)$.

We now need to show that, for all other distributions~$p$, the following inequality holds
\begin{equation}
\left \Vert p - E(p) \right\Vert_1 \leq O(\sqrt{\varepsilon}).    
\end{equation}
Let $\tilde{p} := E(p) $. If $\tilde{p} = p$, then the desired inequality trivially holds. Otherwise, consider that $\tilde{p}_j \geq p_j$ for all $j \neq k$ and $p_k > \tilde{p}_k $, due to the form of the classical channel $E$ in \eqref{eq:cl-ch-rat-to-arb}. Indeed, this follows because, for $j\neq k$, we have that
\begin{align}
    \tilde{p}_j
    & = \sum_{i=1}^k P(j|i) p_i \nonumber \\
    & = P(j|j) p_j + P(j|k)p_k + \sum_{i\in \mathcal{I} : i\neq j} P(j|i) p_i \nonumber \\
    & = p_j + \Delta_j p_k / q_k \nonumber  \\
    & \geq p_j,
\end{align}
while for $j = k$, we have that
\begin{align}
    \tilde{p}_k
    & = \sum_{i=1}^k P(k|i) p_i \nonumber \\
    & = P(k|k) p_k +  \sum_{i\in \mathcal{I} } P(k|i) p_i \nonumber \\
    & =   \left(1- \frac{\Delta} { q_k}\right) p_k  \nonumber \\
    & < p_k.
\end{align}
Now consider that
\begin{align}
	\frac{1}{2} \left \Vert p - \tilde{p} \right\Vert_1 &= p_k - \tilde{p}_k, \nonumber \\
	&= p_k - \left(1- \frac{\Delta}{q_k}\right)p_k, \nonumber \\
	&= \left( \frac{\Delta}{q_k}\right) p_k, \nonumber \\
	&\leq \frac{\Delta}{q_k}, \nonumber \\
	&\leq \frac{k}{N q_k}, \nonumber \\
	&\leq \frac{1}{\sqrt{N}}, \nonumber \\
	&\leq \sqrt{\frac{\varepsilon}{k}},
\end{align}
where the first equality follows from Lemma~\ref{lem:rewrite-TD} and the reasoning above, the first inequality follows because $p_k \leq 1$, the second inequality follows because $\Delta = q_k - \tilde{q}_k \leq \frac{k}{N}$, the third inequality follows because $ q_k \geq \frac{k}{\sqrt{N}}$, and the last inequality follows because $\sqrt{\frac{k}{N}} \leq  \sqrt{\varepsilon}$.
Thus, for all other distributions $p$, the following inequality holds $\left\Vert p - E(p) \right\Vert_1 \leq O(\sqrt{\varepsilon})$.

By applying Definition~\ref{rmrk:Reversal}, the reversal channel $R$ for $E$ on $q$ is defined as follows:%
\begin{equation}
R(i|j):=\frac{P(j|i)q_{i}}{\tilde{q}_{j}}.
\end{equation}
For $i,j\in\mathcal{I}$, this becomes%
\begin{equation}
R(i|j)=\frac{\delta_{i,j}q_{i}}{\tilde{q}_{j}}.
\end{equation}
For $i\in\mathcal{I}$ and $j=k$, we have%
\begin{equation}
R(i|j)=0.
\end{equation}
For $i=k$ and $j\in\mathcal{I}$, we have%
\begin{equation}
R(i|j)=\frac{\Delta_{j}}{q_{k}}\frac{q_{k}}{\tilde{q}_{j}}=\frac{\Delta_{j}%
}{\tilde{q}_{j}}.
\end{equation}
For $i=j=k$, we have%
\begin{equation}
R(i|j)=\left(  1-\frac{\Delta}{q_{k}}\right)  \frac{q_{k}}{\tilde{q}_{k}%
}=\frac{q_{k}}{\tilde{q}_{k}}-\frac{\Delta}{q_{k}}\frac{q_{k}}{\tilde{q}_{k}%
}=\frac{q_{k}-\Delta}{\tilde{q}_{k}}=1.
\end{equation}
Summarizing all of this, the reversal channel $R(i|j)$ can be written as%
\begin{equation}
R(i|j)=\left\{
\begin{array}
[c]{cc}%
\frac{\delta_{i,j}q_{i}}{\tilde{q}_{i}} & i,j\in\mathcal{I}\\
0 & i\in\mathcal{I},\ j=k\\
\frac{\Delta_{j}}{\tilde{q}_{j}} & i=k,\ j\in\mathcal{I}\\
1 & i=j=k
\end{array}
\right.  .
\end{equation}
The action of this reversal channel on an arbitrary probability distribution
$p$, leading to an output distribution $Rp$, is as follows. For $i\neq k$, we
have that%
\begin{align}
(Rp)_{i} &  =\sum_{j=1}^{k}R(i|j)p_{j} \nonumber\\
&  =R(i|k)p_{k}+R(i|i)p_{i}+\sum_{j\in\mathcal{I},j\neq i}R(i|j)p_{j}\nonumber\\
&  =0+\frac{q_{i}p_{i}}{\tilde{q}_{i}}+0\nonumber\\
&  =\frac{q_{i}p_{i}}{\tilde{q}_{i}}\nonumber\\
&  \leq p_{i},
\end{align}
and for $i=k$, we have that%
\begin{align}
(Rp)_{k} &  =\sum_{j=1}^{k}R(k|j)p_{j}\nonumber\\
&  =R(k|k)p_{k}+\sum_{j\in\mathcal{I}}R(k|j)p_{j}\nonumber\\
&  =p_{k}+\sum_{j\in\mathcal{I}}\frac{\Delta_{j}}{\tilde{q}_{j}}p_{j}\nonumber\\
&  >p_{k}.\label{eq:reversal-on-k}%
\end{align}
Then we establish the following bound:
\begin{align}
\frac{1}{2}\left\Vert Rp-p\right\Vert _{1} &  =(Rp)_{k}-p_{k}\nonumber\\
&  =\sum_{j\in\mathcal{I}}\frac{\Delta_{j}}{\tilde{q}_{j}}p_{j}\nonumber\\
&  \leq\frac{1}{\tilde{q}_{k}}\sum_{j\in\mathcal{I}}\Delta_{j}\nonumber\\
&  =\frac{\Delta}{\tilde{q}_{k}}\nonumber\\
&  \leq\frac{k}{N\tilde{q}_{k}}\nonumber\\
&  \leq\frac{2}{\sqrt{N}}\nonumber\\
&  \leq2\sqrt{\frac{\varepsilon}{k}}.
\end{align}
The first equality follows because $(Rp)_{k}>p_{k}$ and $(Rp)_{i}\leq p_{i}$
for $i\neq k$, and by applying Lemma~\ref{lem:rewrite-TD}. The second equality follows from
\eqref{eq:reversal-on-k}. The first inequality follows because $p_{j}\leq1$
and $\tilde{q}_{j}\geq\tilde{q}_{k}$ for all $j\in\mathcal{I}$. The third
equality follows from \eqref{eq:delta-def-pf}. The second inequality follows because $\Delta\leq
k/N$. The third inequality follows from applying \eqref{L5-qtildek>0} and \eqref{L5-NChoice} to conclude that
\begin{equation}
\tilde{q}_{k}\geq q_{k}-\frac{k}{N}\geq\frac{k}{\sqrt{N}}-\frac{k}{N} \geq 
\frac{k}{2\sqrt{N}}  , \label{eq:k-N-ineq-pf}
\end{equation}
with the last inequality in \eqref{eq:k-N-ineq-pf} following because
\begin{equation}
    \frac{1}{\sqrt{N}}-\frac{1}{N} = 
\frac{\sqrt{N}-1}{N}\geq\frac{\tfrac{1}{2} \sqrt{N}}{N} = \frac{1}{2\sqrt{N}}
\end{equation} when $N\geq4$, which is a consequence of \eqref{eq:choice-of-N-suff-large}.
  The final inequality follows
because $\sqrt{\frac{k}{N}}\leq\sqrt{\varepsilon}$. 
\end{proof}

\begin{remark} \label{rmrk:FullRank}
For a full rank distribution $p$, the distribution $E(p)$ as defined above is full rank. The reasoning is as follows: 
\begin{itemize}
	\item The channel increases all entries except the last entry. Thus, all entries except the last entry in $E(p)$ are strictly greater than zero because $p$ is full rank.
	\item The last entry in $E(p)$ is $\left( 1- \frac{\Delta}{q_k} \right) p_k$. Since $\Delta = q_k - \tilde{q}_k$, by the choice of $N$, we see that this entry is strictly greater than zero as well (see \eqref{L5-qtildek>0}).
\end{itemize}
Thus, the distribution $E(p)$ is full rank for a full rank $p$.
\end{remark}

\begin{lemma}
[Splitting of channel \cite{Brand_o_2015}]\label{lem:SplitChannel} Suppose
that a channel, for some fixed input probability distribution $u = (u_{1}
,\ldots, u_{\ell}, 0, \ldots, 0)$, where $u_{1} ,\ldots, u_{\ell}> 0$, outputs
the following probability distribution $u^{\prime}= (u^{\prime}_{1} ,\ldots,
u^{\prime}_{\ell}, 0, \ldots, 0)$. Moreover, suppose that $\Lambda(w) = w$
holds for some full rank distribution $w$. Then $\Lambda= \Lambda_{1}
\oplus\Lambda_{2}$, where $\Lambda_{1}$ acts on the first $\ell$ elements and
outputs to the first $\ell$ elements, and $\Lambda_{2}$ acts on the remaining
$n - \ell$ elements and outputs to the remaining $n - \ell$ elements.
\end{lemma}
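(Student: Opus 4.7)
The plan is to verify that the conditional probabilities $P(j|i)$ representing $\Lambda$ vanish on both off-diagonal blocks of the partition $\{1,\ldots,\ell\} \sqcup \{\ell+1,\ldots,n\}$. Once both off-diagonal blocks are zero, each diagonal block inherits row-normalization from $\Lambda$ and is thus itself a legitimate classical channel, which yields the direct-sum decomposition $\Lambda = \Lambda_1 \oplus \Lambda_2$. I would use the two hypotheses in sequence, each killing one of the two off-diagonal blocks.

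First, I would use $\Lambda(u) = u'$ to kill the top-right block. For any $j > \ell$, we have $u'_j = 0$, so
\begin{equation}
0 = \sum_{i=1}^{n} P(j|i)\, u_i = \sum_{i=1}^{\ell} P(j|i)\, u_i,
\end{equation}
where the second equality uses $u_i = 0$ for $i > \ell$. Since $u_i > 0$ for every $i \leq \ell$ and $P(j|i) \geq 0$, each summand must vanish. This gives $P(j|i) = 0$ whenever $i \leq \ell$ and $j > \ell$, and as a consequence $\sum_{j=1}^{\ell} P(j|i) = 1$ for all $i \leq \ell$.

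Second, I would exploit the fixed-point condition $\Lambda(w) = w$ together with the conclusion of step one to kill the bottom-left block. Summing the relation $w_j = \sum_i P(j|i)\, w_i$ over $j \in \{1,\ldots,\ell\}$ and using step one for the terms with $i \leq \ell$ yields
\begin{equation}
\sum_{j=1}^{\ell} w_j = \sum_{i=1}^{\ell} w_i + \sum_{i=\ell+1}^{n} w_i \sum_{j=1}^{\ell} P(j|i),
\end{equation}
and after cancellation
\begin{equation}
0 = \sum_{i=\ell+1}^{n} w_i \sum_{j=1}^{\ell} P(j|i).
\end{equation}
Because $w$ is full rank, every $w_i > 0$, so non-negativity forces $P(j|i) = 0$ for all $i > \ell$ and $j \leq \ell$. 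Combining both conclusions, $P$ is block-diagonal with row-stochastic blocks $\Lambda_1$ on the first $\ell$ coordinates and $\Lambda_2$ on the remaining $n-\ell$ coordinates, as claimed.

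There is no real obstacle in the argument --- it is an elementary positivity/normalization bookkeeping --- but it is worth noting that both structural hypotheses are used in an essential way and cannot be weakened: strict positivity of $u$ on its support is what drives step one (without it, the top-right block need not vanish), while full rank of the fixed point $w$ is what drives step two (without it, some index $i > \ell$ might escape the argument). Thus the lemma is really the combined consequence of these two sharp features of $u$ and $w$.
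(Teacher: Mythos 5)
Your proof is correct and follows essentially the same two-step logic as the paper's: first use $\Lambda(u)=u'$ together with strict positivity of $u$ on $\{1,\ldots,\ell\}$ to force $P(j|i)=0$ for $i\le\ell$, $j>\ell$, then use the full-rank fixed point $w$ to force the other off-diagonal block to vanish. The paper packages the same argument in terms of indicator random variables and conditional probabilities $P(Y|X)$, whereas your direct entry-wise positivity bookkeeping is a cleaner rendering of the identical idea.
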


\begin{proof}
Let $\mathcal{S}$ denote the set consisting of the first $\ell$ letters of the input alphabet (those
for which $u_{1},\ldots,u_{\ell}>0$). Consider the joint probability
distribution induced by the input distribution $w$ and the channel $\Lambda$.
Let $X$ be an indicator random variable, equal to zero if the channel input is
in $\mathcal{S}$ and equal to one if the channel input is in $\mathcal{S}^{c}%
$. Similarly, let $Y$ be an indicator random variable for $\mathcal{S}$ in the
same way for the channel output. Since the channel preserves $w$, it follows
that
\begin{align}
P(Y=0) &  =P(X=0),\\
P(Y=1) &  =P(X=1).
\end{align}
Moreover, since the channel transforms the input probability distribution $u$
to $u^{\prime}$, it follows that%
\begin{equation}
\sum_{j\in\mathcal{S}}\Lambda_{j|i}=1\ ~\forall i\in\mathcal{S}%
.\label{eq:condition-prob-mass-one-ch-1}%
\end{equation}
To see this, consider that%
\begin{equation}
u_{j}^{\prime}=\sum_{i}\Lambda_{j|i}u_{i}=\sum_{i\in\mathcal{S}}\Lambda_{j|i}u_{i},
\end{equation}
where the second equality follows because $u_{i}=0$ if $i\in\mathcal{S}^{c}$.
Now summing over $j\in\mathcal{S}$, we find that%
\begin{equation}
\sum_{j\in\mathcal{S}}u_{j}^{\prime}=1,
\end{equation}
again because $\mathcal{S}^{c}$ has zero probability mass for the output probability distribution $u'$. Then we conclude
that%
\begin{equation}
1=\sum_{j\in\mathcal{S}}\sum_{i\in\mathcal{S}}\Lambda_{j|i}u_{i}=\sum
_{i\in\mathcal{S}}u_{i}\sum_{j\in\mathcal{S}}\Lambda
(j|i).\label{eq:prob-mass-arg-ch-1}%
\end{equation}
From this condition, we can conclude that
\eqref{eq:condition-prob-mass-one-ch-1} holds. To prove this claim, we use the
method of contradiction. Suppose that it is not true, i.e., that there exists
some $i\in\mathcal{S}$ such that $\sum_{j\in\mathcal{S}}\Lambda_{j|i}<1$. Then
substituting back into the right-hand side of \eqref{eq:prob-mass-arg-ch-1}
and using the fact that $0<u_{i}<1$ for all $i\in\mathcal{S}$, it follows that
the sum in \eqref{eq:prob-mass-arg-ch-1} is strictly less than one, thus
giving a contradiction.

We now claim that the condition in \eqref{eq:condition-prob-mass-one-ch-1}
implies that $P\!\left(  Y=0|X=0\right)  =1$. To see this, consider that%
\begin{align}
P\!\left(  Y=0|X=0\right)    & =\frac{P\!\left(  Y=0,X=0\right)  }{P(X=0)}\\
& =\frac{\sum_{i,j\in\mathcal{S}}\Lambda_{j|i}w_{i}}{\sum_{i\in\mathcal{S}%
}w_{i}}\\
& =\frac{\sum_{i\in\mathcal{S}}w_{i}\sum_{j\in\mathcal{S}}\Lambda_{j|i}}%
{\sum_{i\in\mathcal{S}}w_{i}}\\
& =\frac{\sum_{i\in\mathcal{S}}w_{i}}{\sum_{i\in\mathcal{S}}w_{i}}\\
& =1.
\end{align}
Thus, we conclude that%
\begin{align}
\begin{split}
P(Y=0) &  =P(Y=0|X=0)P(X=0)\\
&  \quad\qquad+P(Y=0|X=1)P(X=1),
\end{split}
\\
&  =P(Y=0)+P(Y=0|X=1)P(X=1),
\end{align}
implying that
\begin{equation}
P(Y=0|X=1)P(X=1)=0.
\end{equation}
Since $w$ is full rank, it follows that $P(X=1)>0$. We then conclude that
\begin{equation}
P(Y=0|X=1)=0,
\end{equation}
from which we conclude that%
\begin{equation}
P(Y=1|X=1)=1.
\end{equation}
We can then rewrite this as follows:%
\begin{align}
1  & =P\!\left(  Y=1|X=1\right)  \nonumber\\
& =\frac{P(Y=1,X=1)}{P(X=1)}\\
& =\frac{\sum_{i,j\in\mathcal{S}^{c}}\Lambda_{j|i}w_{i}}{\sum_{i\in
\mathcal{S}^{c}}w_{i}}\\
& =\frac{\sum_{i\in\mathcal{S}^{c}}w_{i}\sum_{j\in\mathcal{S}^{c}}%
\Lambda_{j|i}}{\sum_{i^{\prime}\in\mathcal{S}^{c}}w_{i^{\prime}}}\\
& =\sum_{i\in\mathcal{S}^{c}}w_{i}^{\prime}\sum_{j\in\mathcal{S}^{c}}%
\Lambda_{j|i},
\end{align}
where $w_{i}^{\prime}$ denotes the following probability distribution on
$\mathcal{S}^{c}$:%
\begin{equation}
w_{i}^{\prime}:=w_{i}/\sum_{i^{\prime}\in\mathcal{S}^{c}}w_{i^{\prime}}.
\end{equation}
Now following the same reasoning used to arrive at
\eqref{eq:condition-prob-mass-one-ch-1}, we conclude that%
\begin{equation}
\sum_{j\in\mathcal{S}^{c}}\Lambda_{j|i}=1\ ~\forall i\in\mathcal{S}%
^{c}.\label{eq:condition-prob-mass-one-ch-2}%
\end{equation}
Combining \eqref{eq:condition-prob-mass-one-ch-1} and
\eqref{eq:condition-prob-mass-one-ch-2}, we conclude that first-group elements
are always mapped to first group and similarly for the second group. So the
channel can be written as a direct sum of two channels as indicated.
\end{proof}

\begin{lemma}[Continuity \cite{van_Erven_2014}] 
\label{lem:RenyiContinuity}
The relative entropy $\D{}{p}{q}$ is continuous in both arguments $p$ and $q$,  when $q$ has full rank. 
\end{lemma}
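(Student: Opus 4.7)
The plan is to decompose $D(p\Vert q)$ into two pieces that can be treated separately, namely
\begin{equation}
    D(p\Vert q) = \sum_x p(x) \ln p(x) - \sum_x p(x) \ln q(x),
\end{equation}
and then to argue continuity of each piece in turn, invoking the standard convention $0 \ln 0 = 0$. Because the alphabet is finite, continuity of the sum reduces to continuity of each summand, so the whole argument is term-by-term.

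First I would handle the $-\sum_x p(x) \ln q(x)$ term. Since $q$ has full rank by hypothesis, each $q(x)$ lives in the open interval $(0,1]$, on which the map $q(x)\mapsto \ln q(x)$ is continuous; hence $(p(x),q(x))\mapsto p(x)\ln q(x)$ is jointly continuous on $[0,1]\times (0,1]$. Summing over the finite alphabet preserves joint continuity, so this term is jointly continuous in $(p,q)$ whenever $q$ is full rank.

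Next I would handle the $\sum_x p(x) \ln p(x)$ term, i.e. the (negative) Shannon entropy of $p$. The only subtle point is the behavior of $x\mapsto x\ln x$ at $x=0$: one checks that $\lim_{x\to 0^+} x\ln x = 0$, so defining its value at $0$ to be $0$ extends it to a continuous function on $[0,1]$. Consequently $p\mapsto \sum_x p(x)\ln p(x)$ is continuous on the probability simplex, and this term does not depend on $q$ at all.

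Combining the two parts yields continuity of $D(p\Vert q)$ in both arguments on the domain where $q$ is full rank. The only genuine obstacle is the boundary behavior $p(x)\to 0$, which is resolved by the $0\ln 0 := 0$ convention together with the continuity of $x\ln x$ at the origin; the assumption of full rank for $q$ is precisely what prevents the analogous problem from appearing in the cross term $p(x)\ln q(x)$.
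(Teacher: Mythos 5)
Your proof is correct. Note that the paper does not actually supply a proof of this lemma at all---it is stated with a citation to van Erven and Harrem\"oes and used as a black box---so your term-by-term argument is a genuine addition rather than a rederivation. The decomposition $D(p\Vert q)=\sum_x p(x)\ln p(x)-\sum_x p(x)\ln q(x)$ is valid here because, on a finite alphabet with $q$ of full rank, both pieces are finite, so no $\infty-\infty$ ambiguity arises; the cross term is jointly continuous on $[0,1]\times(0,1]$ since $\ln$ is continuous away from zero, and the entropy term is handled by the continuous extension of $x\mapsto x\ln x$ to $x=0$. A finite sum of jointly continuous summands is jointly continuous, which in fact gives joint continuity in $(p,q)$ on the open set of full-rank $q$, slightly more than the separate continuity the lemma literally asserts and exactly what the paper needs in the proof of Theorem~2 (where both $p$ and $q$ are perturbed simultaneously). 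The only thing worth making explicit is that you are working on a fixed finite alphabet, so "full rank'' guarantees $\min_x q(x)>0$ and hence a uniform bound on $|\ln q(x)|$ in a neighborhood of $q$; with that remark the argument is complete.
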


\begin{lemma}[Inclusion of support \cite{Renner2005SecurityOQ}] 
\label{lem:supp}
Let $\rho_{AB}$ be a density operator acting on $\mathcal{H}_A \otimes \mathcal{H}_B$. Then
\begin{equation}
	\supp{\rho_{AB}} \subseteq \supp{\rho_A} \otimes \supp{\rho_B}.
\end{equation}
\end{lemma}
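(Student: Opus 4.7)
The plan is to argue via the support projectors. Let $\Pi_{A}$ denote the orthogonal projector onto $\supp{\rho_A}$ and $\Pi_{B}$ the orthogonal projector onto $\supp{\rho_B}$, with complementary projectors $\Pi_{A}^{\perp} := I_A - \Pi_A$ and $\Pi_{B}^{\perp} := I_B - \Pi_B$. The goal reduces to showing that $\rho_{AB} = (\Pi_A \otimes \Pi_B)\, \rho_{AB}\, (\Pi_A \otimes \Pi_B)$, because the range of $\Pi_A \otimes \Pi_B$ is precisely $\supp{\rho_A} \otimes \supp{\rho_B}$.

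First I would handle the $A$ factor. Consider the operator
\begin{equation}
M_A := (\Pi_{A}^{\perp} \otimes I_B)\, \rho_{AB}\, (\Pi_{A}^{\perp} \otimes I_B),
\end{equation}
which is positive semidefinite because $\rho_{AB} \geq 0$. Taking the partial trace over $B$ gives $\Tr_B(M_A) = \Pi_{A}^{\perp} \rho_{A} \Pi_{A}^{\perp}$, and this vanishes by definition of $\Pi_A$ as the projector onto $\supp{\rho_A}$. Since $M_A \geq 0$ and $\Tr(M_A) = \Tr(\Tr_B(M_A)) = 0$, we get $M_A = 0$. Writing $M_A = X X^\dagger$ with $X := (\Pi_{A}^{\perp} \otimes I_B) \sqrt{\rho_{AB}}$ and using that $X X^\dagger = 0$ implies $X = 0$, we conclude $(\Pi_{A}^{\perp} \otimes I_B)\, \rho_{AB} = 0$, and hence $(\Pi_A \otimes I_B)\, \rho_{AB} = \rho_{AB}$.

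An entirely symmetric argument, swapping the roles of $A$ and $B$, yields $(I_A \otimes \Pi_B)\, \rho_{AB} = \rho_{AB}$. Composing the two identities and taking adjoints gives
\begin{equation}
\rho_{AB} = (\Pi_A \otimes \Pi_B)\, \rho_{AB}\, (\Pi_A \otimes \Pi_B),
\end{equation}
so any vector in $\supp{\rho_{AB}}$ lies in the range of $\Pi_A \otimes \Pi_B$, which is $\supp{\rho_A} \otimes \supp{\rho_B}$.

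There is no serious obstacle here, so this really is a short linear-algebra exercise; the only point that requires care is the step $M_A \geq 0$ and $\Tr(M_A) = 0 \Rightarrow M_A = 0$, followed by the implication $(\Pi_{A}^{\perp} \otimes I_B)\, \rho_{AB}\, (\Pi_{A}^{\perp} \otimes I_B) = 0 \Rightarrow (\Pi_{A}^{\perp} \otimes I_B)\, \rho_{AB} = 0$, which is just the standard fact that a positive operator with zero diagonal block is annihilated on that subspace. Everything else is bookkeeping with tensor-product support structures.
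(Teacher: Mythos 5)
Your proof is correct, but it takes a genuinely different route from the paper's. The paper first treats the pure-state case via the Schmidt decomposition, observing that $\supp{\outprod{\psi}{\psi}}$ is the span of $\ket{\psi}$ and hence lies in $\spn{\ket{\phi^z}}\otimes\spn{\ket{\psi^z}}=\supp{\rho_A}\otimes\supp{\rho_B}$, and then extends to mixed states by decomposing $\rho_{AB}$ into pure components and using that the support of a sum of positive operators is the span of the union of their supports. Your argument instead works directly with the support projectors: you show $(\Pi_A^{\perp}\otimes I_B)\rho_{AB}(\Pi_A^{\perp}\otimes I_B)=0$ by noting its partial trace is $\Pi_A^{\perp}\rho_A\Pi_A^{\perp}=0$, upgrade this to $(\Pi_A^{\perp}\otimes I_B)\rho_{AB}=0$ via the factorization through $\sqrt{\rho_{AB}}$, and symmetrize. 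All the steps check out: the partial-trace identity $\Tr_B[(K_A\otimes I_B)X(K_A^{\dagger}\otimes I_B)]=K_A\Tr_B(X)K_A^{\dagger}$, the implication from vanishing trace of a positive operator, and the zero-block argument are each standard and correctly applied. Your approach is a one-shot argument that avoids both the reduction to pure states and the (implicit, though standard) fact about supports of sums of positive operators that the paper relies on; the paper's approach buys a more geometric picture of where the support sits, at the cost of the two-stage structure. Either is acceptable.
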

\begin{proof}
First let us suppose that $\rho_{AB}$ is a pure state. Then $\rho_{AB} = \outprod{\psi}{\psi}$. Let $\ket{\psi} = \sum_{z\in\mathcal{Z}} \alpha_z \ket{\phi^z}\otimes\ket{\psi^z}$ be the Schmidt decomposition of $\ket{\psi}$.
Then
\begin{align}
\supp{\rho_{AB}} &= \{ \ket{\psi} \} \nonumber \\
&\subseteq \spn{\ket{\phi^z}_{z\in\mathcal{Z}}} \otimes \spn{\ket{\psi^z}_{z\in\mathcal{Z}}}.
\end{align}
Since $\spn{\ket{\phi^z}_{z\in\mathcal{Z}}} = \supp{\rho_A}$ and  $\spn{\ket{\psi^z}_{z\in\mathcal{Z}}} = \supp{\rho_B}$, we see that the lemma holds in this case.

For mixed states, let $\rho_{AB} = \sum_{x\in\mathcal{X}} \rho_{AB}^x$ be a decomposition into pure states for $\rho_{AB}$. Then
\begin{align}
& \!\!\! \supp{\rho_{AB}} \nonumber \\
&= \spn{\bigcup_{x\in\mathcal{X}} \supp{\rho_{AB}^x}}, \nonumber
\\
&\subseteq \spn{\bigcup_{x\in\mathcal{X}} \supp{\rho_A^x} \otimes \supp{\rho_B^x}}, \nonumber
\\ 
\begin{split}
&\subseteq \left( \spn{\bigcup_{x\in\mathcal{X}} \supp{\rho_A^x}} \right) \otimes \\
              &\qquad \qquad \left( \spn{\bigcup_{x\in\mathcal{X}} \supp{\rho_B^x}} \right),
\end{split} \nonumber 
\\
	&= \supp{\rho_A} \otimes \supp{\rho_B}.
\end{align}
This concludes the proof.
\end{proof}

\begin{lemma}[Superadditivity of Relative Entropy \cite{Capel_2018}] 
\label{lem:SuperS1}
Let $\mathcal{H}_{AB} = \mathcal{H}_A \otimes \mathcal{H}_B$ be a bipartite Hilbert space, and let $\rho_{AB}$,   $\sigma_A$, and $\sigma_B$ be density operators. Then
\begin{align}
& \!\!\! \SD{}{\rho_{AB}}{\sigma_A \otimes \sigma_B} \nonumber \\
&= \SD{}{\rho_{AB}}{\rho_A \otimes \rho_B} + \SD{}{\rho_A}{\sigma_A} + \SD{}{\rho_B}{\sigma_B}.
\end{align}
Thus, \begin{equation}
 \SD{}{\rho_{AB}}{\sigma_A \otimes \sigma_B} 
\geq   \SD{}{\rho_A}{\sigma_A} + \SD{}{\rho_B}{\sigma_B},
\end{equation}
and equality holds if and only if $\rho_{AB} = \rho_A \otimes \rho_B$.
\end{lemma}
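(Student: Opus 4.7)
The plan is to expand both sides using the definition $\SD{}{\rho}{\sigma} = \Tr(\rho\ln\rho) - \Tr(\rho\ln\sigma)$ and exploit the multiplicativity of the logarithm on tensor products. Since $\sigma_A\otimes I_B$ and $I_A\otimes\sigma_B$ commute, functional calculus gives $\ln(\sigma_A\otimes\sigma_B) = \ln\sigma_A\otimes I_B + I_A\otimes\ln\sigma_B$. Substituting this in and using $\Tr(\rho_{AB}(\ln\sigma_A\otimes I_B)) = \Tr(\rho_A\ln\sigma_A)$ together with the analogous identity for the $B$ side, I would obtain
\begin{equation}
\SD{}{\rho_{AB}}{\sigma_A\otimes\sigma_B} = \Tr(\rho_{AB}\ln\rho_{AB}) - \Tr(\rho_A\ln\sigma_A) - \Tr(\rho_B\ln\sigma_B).
\end{equation}

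Next, I would add and subtract $\Tr(\rho_{AB}\ln(\rho_A\otimes\rho_B))$ in order to isolate $\SD{}{\rho_{AB}}{\rho_A\otimes\rho_B}$ as a single summand. Applying the same $\ln$-split identity with $\sigma_{A,B}$ replaced by $\rho_{A,B}$, the residual terms collapse to $\Tr(\rho_A\ln\rho_A) - \Tr(\rho_A\ln\sigma_A) + \Tr(\rho_B\ln\rho_B) - \Tr(\rho_B\ln\sigma_B)$, which is exactly $\SD{}{\rho_A}{\sigma_A} + \SD{}{\rho_B}{\sigma_B}$. This proves the claimed identity. The stated inequality then follows immediately from non-negativity of the quantum relative entropy (equation~\eqref{QuaRenDiv>0} at $\alpha=1$), and the equality characterization comes from the fact, also recorded in the excerpt, that $\SD{}{\rho_{AB}}{\rho_A\otimes\rho_B} = 0$ if and only if $\rho_{AB} = \rho_A\otimes\rho_B$.

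There is no serious obstacle here; the only care required is around the finiteness of the quantities involved. Specifically, if the left-hand side is finite then $\supp{\rho_{AB}}\subseteq\supp{\sigma_A}\otimes\supp{\sigma_B}$, and combining this with Lemma~\ref{lem:supp} (which gives $\supp{\rho_{AB}}\subseteq\supp{\rho_A}\otimes\supp{\rho_B}$) ensures that every individual trace appearing in the derivation is well defined, so that the algebraic manipulations above are justified. If the left-hand side is infinite the inequality is trivial and the identity holds in the extended sense, so no additional argument is needed.
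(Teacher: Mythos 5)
Your proposal is correct and follows essentially the same route as the paper's proof: both add and subtract $\ln(\rho_A\otimes\rho_B)$ (equivalently, the term $\Tr[\rho_{AB}\ln(\rho_A\otimes\rho_B)]$) to split the relative entropy into the mutual-information term plus the two local relative entropies, and both obtain the inequality and equality case from non-negativity and faithfulness of $\SD{}{\rho_{AB}}{\rho_A\otimes\rho_B}$. Your extra remark on support/finiteness is a harmless refinement that the paper leaves implicit.
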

\begin{proof}
Using the definition of quantum relative entropy, we find that
	\begin{align}
		&\!\!\! \SD{}{\rho_{AB}}{\sigma_A \otimes \sigma_B} \nonumber \\
		&= \Tr \left[ \rho_{AB} \left(\ln\rho_{AB} -\ln\sigma_A \otimes \sigma_B \right) \right] \nonumber
		\\
		\begin{split}
			 &= \Tr \left[ \rho_{AB} \left(\ln\rho_{AB} -\ln\rho_A \otimes \rho_B \right)\right] \nonumber \\
             &\qquad + \Tr \left[\rho_{AB}\left(\ln\rho_A \otimes \rho_B -\ln\sigma_A \otimes \sigma_B \right) \right]
		\end{split}	\nonumber
		\\
		\begin{split}		
			&= \SD{}{\rho_{AB}}{\rho_A \otimes \rho_B} \\
			&\qquad + \SD{}{\rho_A \otimes \rho_B}{\sigma_A \otimes \sigma_B}
		\end{split} \nonumber
		\\
		\begin{split}
		&= \SD{}{\rho_{AB}}{\rho_A \otimes \rho_B} + \SD{}{\rho_A}{\sigma_A} + \SD{}{\rho_B}{\sigma_B}.
		\end{split}
	\end{align}
Since $\SD{}{\rho_{AB}}{\rho_A \otimes \rho_B} \geq 0$ (see \eqref{QuaRenDiv>0}), we conclude that ${\SD{}{\rho_{AB}}{\sigma_A \otimes \sigma_B} \ge \SD{}{\rho_A}{\sigma_A} + \SD{}{\rho_B}{\sigma_B}}$. The conclusion about equality holding is a direct consequence of $\SD{}{\rho_{AB}}{\rho_A \otimes \rho_B} = 0$ if and only if $\rho_{AB} = \rho_A \otimes \rho_B$.
\end{proof}

\begin{lemma}[Superadditivity of $D_0$]
\label{lem:SuperS0}
Let $\mathcal{H}_{AB} = \mathcal{H}_A \otimes \mathcal{H}_B$ be a bipartite Hilbert space, and let $\rho_{AB}$, $\sigma_A$, and  $ \sigma_B$ be density operators. Then
\begin{equation}
\SD{0}{\rho_{AB}}{\sigma_A \otimes \sigma_B} \geq \SD{0}{\rho_A}{\sigma_A} + \SD{0}{\rho_B}{\sigma_B}.
\end{equation}
\end{lemma}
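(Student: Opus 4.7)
The plan is to reduce this superadditivity statement directly to the support-inclusion result already recorded as Lemma~\ref{lem:supp}. By definition,
\begin{equation}
\SD{0}{\rho_{AB}}{\sigma_A \otimes \sigma_B} = -\ln \Tr\!\left(\Pi_{\rho_{AB}} (\sigma_A \otimes \sigma_B)\right),
\end{equation}
and the tensor-product structure on the right gives
\begin{equation}
\SD{0}{\rho_A}{\sigma_A} + \SD{0}{\rho_B}{\sigma_B} = -\ln\!\left[\Tr(\Pi_{\rho_A} \sigma_A)\Tr(\Pi_{\rho_B} \sigma_B)\right].
\end{equation}
Since $-\ln$ is monotonically decreasing, the claim is equivalent to the operator--trace inequality
\begin{equation}
\Tr\!\left(\Pi_{\rho_{AB}} (\sigma_A \otimes \sigma_B)\right) \leq \Tr\!\left((\Pi_{\rho_A} \otimes \Pi_{\rho_B})(\sigma_A \otimes \sigma_B)\right),
\end{equation}
where on the right I have used that the trace factorises across the tensor product.

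To establish this trace inequality, I would invoke Lemma~\ref{lem:supp}, which gives $\supp{\rho_{AB}} \subseteq \supp{\rho_A} \otimes \supp{\rho_B}$. This support inclusion translates directly into the operator inequality
\begin{equation}
\Pi_{\rho_{AB}} \leq \Pi_{\rho_A} \otimes \Pi_{\rho_B},
\end{equation}
because $\Pi_{\rho_A} \otimes \Pi_{\rho_B}$ is the projector onto the enclosing subspace. Since $\sigma_A \otimes \sigma_B \geq 0$, the standard fact that $X \leq Y$ and $Z \geq 0$ imply $\Tr(XZ) \leq \Tr(YZ)$ yields the desired trace inequality. Taking $-\ln$ of both sides then gives the lemma.

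The proof is essentially a one-liner once Lemma~\ref{lem:supp} is in hand; there is no real obstacle. The only minor points to be careful about are (i) handling the potentially infinite case, which arises when $\supp{\sigma_A}$ or $\supp{\sigma_B}$ fails to contain the corresponding marginal support so that one of the $-\ln$ arguments vanishes, in which case the inequality holds trivially with $\infty$ on one or both sides, and (ii) ensuring that the inequality between projectors is preserved under multiplication by a positive operator and the trace, which is immediate from cyclicity together with $A^{1/2}(Y-X)A^{1/2} \geq 0$ when $X \leq Y$ and $A \geq 0$. No new technical machinery beyond what is already recorded in Appendix~\ref{sec:TechLem} is required.
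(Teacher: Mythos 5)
Your proof is correct and follows essentially the same route as the paper's: both reduce the claim to the operator inequality $\Pi_{\rho_{AB}} \leq \Pi_{\rho_A} \otimes \Pi_{\rho_B}$ obtained from Lemma~\ref{lem:supp}, then use positivity of $\sigma_A \otimes \sigma_B$, factorization of the trace, and monotonicity of $-\ln$. Your added remarks on the infinite case and on why the trace inequality is preserved are fine but not needed beyond what the paper records.
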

\begin{proof}
From Lemma \ref{lem:supp}, we see that $\supp{\rho_{AB}} \subseteq \supp{\rho_A} \otimes \supp{\rho_B}$ and thus the following operator inequality holds
\begin{equation}
\Pi_{\rho_{AB}} \leq \Pi_{\rho_A} \otimes \Pi_{\rho_B},
\end{equation}
where $\Pi_{\omega}$ denotes the projection onto the support of the state $\omega$.
Then,
\begin{align}
	& \!\!\!\! \SD{0}{\rho_{AB}}{\sigma_A \otimes \sigma_B} \nonumber \\
	&= -\ln\left( \Tr\left[ \Pi_{\rho_{AB}} (\sigma_A \otimes \sigma_B) \right] \right) \nonumber \\
	&\geq -\ln\left( \Tr\left[ (\Pi_{\rho_A} \otimes \Pi_{\rho_B}) (\sigma_A \otimes \sigma_B) \right] \right) \nonumber \\
	&= -\ln\left( \Tr\left[ (\Pi_{\rho_A} \sigma_A \otimes \Pi_{\rho_B} \sigma_B)\right] \right) \nonumber \\
	&= -\ln\left( \Tr\left[\Pi_{\rho_A} \sigma_A\right] \Tr\left[\Pi_{\rho_B} \sigma_B\right] \right) \nonumber \\
	&= -\ln\left( \Tr\left[\Pi_{\rho_A} \sigma_A\right] \right)-\ln\left( \Tr\left[\Pi_{\rho_B} \sigma_B\right] \right) \nonumber \\
	&= \SD{0}{\rho_A}{\sigma_A} + \SD{0}{\rho_B}{\sigma_B}.
\end{align}
This concludes the proof.
\end{proof}

Lemmas \ref{lem:SuperS1} and \ref{lem:SuperS0} are results based on  quantum divergences. However, any result that holds for the quantum case also holds for the classical case. This is because we can plug in commuting quantum states as a special case and these states are  quasi-classical.

\begin{lemma}[M\"uller \cite{M_ller_2018}]
\label{lem:Muller}
Let $p_A, p'_A \in \mathbb{R}^k$ be probability distributions with $p_A^\downarrow \neq p_A'^\downarrow$. Then there exists a probability distribution $q_B$ and an extension $p'_{AB}$ with marginals $p'_A$ and $q_B$ such that
\begin{equation}
p_A \otimes q_B \succ p'_{AB}
\end{equation}
if and only if $H_0(p_A) \leq H_0(p'_A)$ and $H(p_A) < H(p'_A)$. Moreover, if these inequalities are satisfied, then for all $\varepsilon > 0$, we can choose system $B$ and $p'_{AB}$ such that
\begin{equation}
\D{}{p'_{AB}}{p'_A \otimes q_B} < \varepsilon.
\end{equation}
\end{lemma}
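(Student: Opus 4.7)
For the necessity direction ($\Rightarrow$), I would apply Schur-concave monotonicity under doubly stochastic maps. Since $p_A \otimes q_B \succ p'_{AB}$, there is a doubly stochastic matrix $D$ with $p'_{AB} = D(p_A \otimes q_B)$. Shannon entropy is Schur-concave and additive on product distributions, so $H(p_A) + H(q_B) \leq H(p'_{AB})$, while subadditivity yields $H(p'_{AB}) \leq H(p'_A) + H(q_B)$; combining gives $H(p_A) \leq H(p'_A)$. To get strictness, suppose $H(p_A) = H(p'_A)$. Then equality in subadditivity forces $p'_{AB} = p'_A \otimes q_B$, so $p_A \otimes q_B \succ p'_A \otimes q_B$. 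By the cancellation law for majorization this implies $p_A \succ p'_A$, and strict Schur concavity of $H$ combined with equality of Shannon entropies forces $p_A^\downarrow = p'^\downarrow_A$, contradicting the hypothesis. For the $H_0$ inequality, Lemma~\ref{lem:supp} gives $\supp(p'_{AB}) \subseteq \supp(p'_A) \times \supp(q_B)$, so $H_0(p'_{AB}) \leq H_0(p'_A) + H_0(q_B)$; on the other hand, a short partial-sum argument shows that majorization cannot decrease support size, hence $H_0(p'_{AB}) \geq H_0(p_A) + H_0(q_B)$, and the two bounds combine to give $H_0(p_A) \leq H_0(p'_A)$.

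For the sufficiency direction ($\Leftarrow$), I would choose $q_B = \eta_n$, the uniform distribution on $n$ letters, with $n$ taken sufficiently large at the end. Then $(p_A \otimes \eta_n)^\downarrow$ has a simple staircase structure: each value $p_A(i)/n$ appears exactly $n$ times. The task is to construct $p'_{AB}$ on $[k] \times [n]$ with row marginal $p'_A$ and column marginal $\eta_n$ whose descending partial sums lie uniformly below this staircase. The plan is to start from the product candidate $p'_A \otimes \eta_n$ and then redistribute mass across columns by a small perturbation. The hypothesis $H_0(p_A) \leq H_0(p'_A)$ guarantees that the top of the staircase is broad enough to accommodate the largest masses of $p'_{AB}$ (the support inequality is exactly the partial-sum inequality at the support cutoff), while the strict gap $H(p_A) < H(p'_A)$ supplies, for large $n$, the asymptotic slack in the remaining intermediate partial sums. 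Concretely, one sorts both distributions in descending order and constructs $p'_{AB}$ by a greedy water-filling inside each column; verifying the Hardy--Littlewood--P\'olya inequalities reduces to checking a single scaling-limit condition governed by the Shannon inequality, much as in the Klimesh--Turgut analysis of catalytic majorization, adapted here by allowing correlations in $p'_{AB}$.

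For the moreover clause, the construction in the previous paragraph can be made so that $|p'_{AB}(i,j) - p'_A(i)/n|$ is at most of order $1/n^2$, uniformly in $(i,j)$, so $p'_{AB} \to p'_A \otimes \eta_n$ in $\ell^1$ as $n \to \infty$. Since $p'_A \otimes \eta_n$ has full support on its own support (and $p'_{AB}$ is supported inside it by Lemma~\ref{lem:supp}), continuity of the relative entropy in its first argument (Lemma~\ref{lem:RenyiContinuity}) gives $\D{}{p'_{AB}}{p'_A \otimes \eta_n} \to 0$, so taking $n$ large drives it below any prescribed $\varepsilon > 0$.

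The main obstacle will be the explicit construction of $p'_{AB}$ in the sufficiency step: only two aggregate entropic hypotheses are available, but majorization requires all $kn-1$ truncated partial-sum inequalities to hold simultaneously, together with two fixed-marginal constraints and a closeness-to-product constraint. Turning the asymptotic slack from $H(p_A) < H(p'_A)$ into a finite-$n$ combinatorial construction that respects all of these constraints at once is the delicate core of Müller's argument, and any proof must either build the assignment by hand (row-by-row or via a transportation plan) or derive it by a compactness/limiting argument from a continuous analogue.
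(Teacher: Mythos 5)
First, a contextual note: the paper itself does not prove this lemma; it is imported wholesale from M\"uller's work \cite{M_ller_2018}, so there is no in-paper argument to compare against. Judged on its own terms, your attempt contains two genuine gaps, one in each direction.

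In the necessity direction, the skeleton (Schur concavity of $H$ under majorization, additivity on products, subadditivity of $H$ and of $H_0$ via Lemma~\ref{lem:supp}, and the partial-sum argument showing majorization cannot shrink support) is sound. The flaw is the appeal to a ``cancellation law for majorization'' to pass from $p_A \otimes q_B \succ p'_A \otimes q_B$ to $p_A \succ p'_A$. No such law exists: the failure of precisely this implication is the entire content of catalytic majorization \cite{Jonathan1999}, where one exhibits $p \otimes r \succ p' \otimes r$ with $p \not\succ p'$. The conclusion you want is still reachable, because in your equality case you may apply strict Schur concavity directly to the pair $p_A\otimes q_B \succ p'_A\otimes q_B$, whose Shannon entropies coincide, to get $(p_A\otimes q_B)^\downarrow=(p'_A\otimes q_B)^\downarrow$, and then cancel $q_B$ at the level of multisets (e.g.\ via the power sums $\sum_{i,j}(p_i r_j)^s=\bigl(\sum_i p_i^s\bigr)\bigl(\sum_j r_j^s\bigr)$) to conclude $p_A^\downarrow=p_A'^\downarrow$. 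As written, however, the step is false.

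The sufficiency direction is the real problem: beyond being an acknowledged sketch of the hard combinatorial core, the choice $q_B=\eta_n$ provably cannot work. If $B$ has $n$ letters and $p'_{AB}$ has first marginal $p'_A$, then the row of $p'_{AB}$ belonging to the most likely letter of $p'_A$ sums to $\max_i p'_A(i)$ over $n$ entries and hence contains an entry at least $\max_i p'_A(i)/n$, whereas the largest entry of $p_A\otimes\eta_n$ is exactly $\max_i p_A(i)/n$. The very first Hardy--Littlewood--P\'olya inequality therefore forces $\max_i p_A(i)\geq \max_i p'_A(i)$, i.e.\ $H_\infty(p_A)\leq H_\infty(p'_A)$, which is not implied by the hypotheses: $p_A=(0.6,0.4,0)$ and $p'_A=(0.7,0.15,0.15)$ satisfy $H(p_A)<H(p'_A)$ and $H_0(p_A)\leq H_0(p'_A)$ yet violate it. The whole point of M\"uller's construction is that a suitably \emph{non-uniform} catalyst, allowed to become correlated with the output, erases all R\'enyi conditions except those at $\alpha=0$ and $\alpha=1$; a uniform catalyst leaves the $\alpha=\infty$ constraint (among others) fully intact. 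So the missing construction is not merely unfinished --- it is aimed at a target that does not exist.
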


\section{Proof of Theorem 1}
\label{sec:proofThm1}

\begin{proof}
Since $q$ and $q'$ have rational entries, without loss of generality, we pick $q = \left( \frac{d_1}{N}, \ldots, \frac{d_k}{N}\right)$ and $q' = \left( \frac{d'_1}{N}, \ldots, \frac{d'_k}{N}\right)$ for sets $\{d_i\}_i$ and $\{d'_i\}_i$ of natural numbers, such that
\begin{equation}
    \sum\limits_{i=1}^k d_i = \sum\limits_{i=1}^k d'_i = N,
    \label{eq:cond-on-N-spec}
\end{equation}
respectively.

We begin with statement 1 implies statement 2; i.e., we suppose that $\D{}{p}{q} > \D{}{p'}{q'}$ and $\D{0}{p}{q} \geq \D{0}{p'}{q'}$, and prove the existence of a classical channel $\Lambda$, a probability distribution $r$, and a joint distribution $t'$ that satisfy the conditions of statement 2.

Using Lemma \ref{lem:PreserveRenyi} and statement 1, we conclude that
\begin{equation}
\begin{aligned}
	\D{}{\Gamma_d(p)}{\eta_N} &> \D{}{\Gamma_{d'}(p')}{\eta_N}, \\
	\D{0}{\Gamma_d(p)}{\eta_N} &\geq \D{0}{\Gamma_{d'}(p')}{\eta_N}.
\end{aligned}
\end{equation}
Define $\tilde{p} \equiv \Gamma_d(p)$ and $\tilde{p}' \equiv \Gamma_{d'}(p')$. Thus,
\begin{equation}
\label{T1-Step1}
\begin{aligned}
	\D{}{\tilde{p}}{\eta_N} &> \D{}{\tilde{p}'}{\eta_N}, \\
	\D{0}{\tilde{p}}{\eta_N} &\geq \D{0}{\tilde{p}'}{\eta_N}.
\end{aligned}
\end{equation}

Recalling \eqref{eq:renyi-div-to-unif-relate-ent}, the following equality holds  for all probability distributions~$\tilde{p}$ and $\alpha \in \{0, 1\}$:
\begin{equation}
\label{T1-RenEntDef}
    \D{\alpha}{\tilde{p}}{\eta_N} =\ln(N) - H_\alpha(\tilde{p}).
\end{equation}
Thus, from \eqref{T1-Step1} and \eqref{T1-RenEntDef}, we conclude that
\begin{equation}
\label{T1-RenEnt1}
\begin{aligned}
	H(\tilde{p}) &< H(\tilde{p}'), \\
	H_0(\tilde{p}) &\leq H_0(\tilde{p}').
\end{aligned}
\end{equation}
Furthermore, we conclude that
\begin{equation}
    \tilde{p}^{\downarrow} \neq \tilde{p}'^{\downarrow}, \label{eq:spectra-diff-assump}
\end{equation}
which follows from the structure of the channels $\Gamma_d$ and $\Gamma_{d'}$, as well as the assumption that the relative spectra of the pairs $(p,q)$ and $(p',q')$ are different.

Using \eqref{T1-RenEnt1} and \eqref{eq:spectra-diff-assump}, we apply Lemma~\ref{lem:Muller} to conclude that there exists a probability distribution $r$ and an extension $v'$ with marginals $\tilde{p}'$ and $r$, such that
\begin{equation}
\begin{aligned}
	\tilde{p} \otimes r &\succ v', \\
	\D{}{v'}{\tilde{p}' \otimes r} &< \gamma,
\end{aligned}
\label{T1-MajStep}
\end{equation}
for all $\gamma > 0$. In other words, there exists a  doubly stochastic channel $\Phi$ (see Lemma \ref{lem:mjrz}) such that
\begin{equation}
	\Phi(\tilde{p} \otimes r) = v'.
\end{equation}

Using Lemma~\ref{lem:SplitChannel}, we conclude that $r$ can be considered, without loss of generality, to be of full rank. Concretely, if $r$ is not full rank, then define $u = \tilde{p} \otimes r$, $u' = v'$, and $w = \eta_N \otimes \eta$. Using Lemma~\ref{lem:SplitChannel}, $\Phi$ can be split into $\Phi_1 \oplus \Phi_2$. If $r$ is of full rank, then $\Phi = \Phi_1$.

Finally consider the following channel:
\begin{equation}
	\Lambda := (\Gamma_{d'}^* \otimes \operatorname{id}) \circ \Phi_1 \circ (\Gamma_d \otimes \operatorname{id}).
\end{equation}
Since this is a composition of classical channels, the overall map is a classical channel.\\

\textbf{Proving 2a)}
\begin{align}
	\Lambda(p \otimes r) &= [(\Gamma_{d'}^* \otimes \operatorname{id}) \circ \Phi_1 \circ (\Gamma_d \otimes \operatorname{id})](p \otimes r), \nonumber \\
	&= [(\Gamma_{d'}^* \otimes \operatorname{id}) \circ \Phi_1](\tilde{p} \otimes r), \nonumber \\
	&= (\Gamma_{d'}^* \otimes \operatorname{id})(v').
\end{align}

We now define ${t' = (\Gamma_{d'}^* \otimes \operatorname{id})(v')}$, and  we observe that the marginals of $t'$ are $\Gamma_{d'}^*(\tilde{p}') = p'$ and $r$. \\

\textbf{Proving 2b)}\\

We need to show that $\Lambda(q \otimes \eta) = q' \otimes \eta$.  Consider that
\begin{align}
	\Lambda(q \otimes \eta) &= [(\Gamma_{d'}^* \otimes \operatorname{id}) \circ \Phi_1 \circ (\Gamma_d \otimes \operatorname{id})](q \otimes \eta), \nonumber \\
	&= [(\Gamma_{d'}^* \otimes \operatorname{id}) \circ \Phi_1](\eta_N \otimes \eta), \nonumber \\
	&= (\Gamma_{d'}^* \otimes \operatorname{id})(\eta_N \otimes \eta), \nonumber \\	
	&= q' \otimes \eta.
\end{align}
where the second equality follows from Lemma \ref{lem:embedRational}. Thus, ${\Lambda(q \otimes \eta) = q' \otimes \eta}$.\\

\textbf{Proving 2c)}
\begin{align}
	\gamma &> \D{}{v'}{\tilde{p}'\otimes r}, \nonumber \\
	&\geq \D{}{t'}{p' \otimes r},
\end{align}
where the first inequality follows from \eqref{T1-MajStep} and the second inequality follows from the data-processing inequality in \eqref{RenDivDatProc}, using the channel $\Gamma^{*}_{d'} \otimes \operatorname{id}$. Thus, $\D{}{t'}{p' \otimes r} \leq \gamma$.

This completes the proof that statement 1 implies statement 2.\\

We now establish the reverse implication; i.e., we suppose that there exists a classical channel $\Lambda$ satisfying the conditions of statement 2. Let $\alpha \in \{0, 1\} $. Then,
\begin{align}
    \label{T1-backDer}
	\D{\alpha}{p}{q} + \D{\alpha}{r}{\eta} &= \D{\alpha}{p \otimes r}{q \otimes \eta} \nonumber \\
	&\geq \D{\alpha}{\Lambda(p \otimes r)}{\Lambda(q \otimes \eta)}, \nonumber \\
	&= \D{\alpha}{t'}{q' \otimes \eta}, \nonumber \\
	&\geq \D{\alpha}{p'}{q'} + \D{\alpha}{r}{\eta},
\end{align}
where the first equality follows from the additivity of the \RD in \eqref{RenDivAdd}, the first inequality follows from the data-processing inequality in \eqref{RenDivDatProc}, and the last inequality follows from Lemmas~\ref{lem:SuperS1} and \ref{lem:SuperS0} for $\alpha = 1$ and $\alpha = 0$, respectively.

Since the support of $r$ is contained in the support of $ \eta$, it follows that $\D{\alpha}{r}{\eta}$ is finite and can be subtracted, and we conclude that the following inequality holds for $\alpha \in \{0,1\}$:
\begin{equation}
	\D{\alpha}{p}{q} \geq \D{\alpha}{p'}{q'}.
\end{equation}

To conclude that the inequality 
\begin{equation}
\D{}{p}{q} \geq \D{}{p'}{q'}
\label{eq:app-rel-ent-ineq-thm-1}
\end{equation}
 is actually strict, consider that the equality $\D{}{p}{q} = \D{}{p'}{q'}$ is equivalent to equality in \eqref{T1-backDer}. Saturating the last inequality of \eqref{T1-backDer} is then equivalent to $t' = p' \otimes r$. This in turn implies that the forward channel $\Lambda$ realizes the following transformation:
\begin{align}
    \Lambda(p \otimes r) & = p' \otimes r, \\
    \Lambda(q \otimes \eta) & = q' \otimes \eta .
\end{align}
Now defining $\tilde{p} = \Gamma_d(p)$ and $\tilde{p}' = \Gamma_{d'}(p')$ for $N$ satisfying \eqref{eq:cond-on-N-spec}, and recalling that $\eta_N = \Gamma_d(q)$ and $\eta_N =   \Gamma_{d'}(q')$, we conclude that
\begin{align}
    [(\Gamma_{d'} \otimes \operatorname{id})\circ \Lambda\circ (\Gamma^*_d \otimes \operatorname{id})](\tilde{p} \otimes r) & = \tilde{p}' \otimes r, \\
    [(\Gamma_{d'} \otimes \operatorname{id})\circ \Lambda\circ (\Gamma^*_d \otimes \operatorname{id})](\eta_N \otimes \eta) & = \eta_N \otimes \eta .
\end{align}
So this means that $(\Gamma_{d'} \otimes \operatorname{id})\circ \Lambda\circ (\Gamma^*_d \otimes \operatorname{id})$ is a doubly stochastic channel, and in turn that $ \tilde{p}  \succ_T \tilde{p}' $ for $\tilde{p}^\downarrow \neq \tilde{p}'^\downarrow$ (the latter following from the assumption of differing relative spectra). By applying the main result of \cite{klimesh2007inequalities}, it follows that $H(\tilde{p}) < H(\tilde{p}')$, which in turn is equivalent to $\D{}{p}{q} > \D{}{p'}{q'}$ by applying \eqref{T1-RenEntDef}. This contradicts the possibility of equality in \eqref{eq:app-rel-ent-ineq-thm-1}, and so we  conclude that only a strict inequality can hold.
\end{proof}

\section{Proof of Theorem 2}
\label{sec:proofThm2}

\begin{proof}

Some aspects of the proof given below follow the proof of \cite[Theorem~17]{Brand_o_2015} closely.

We begin with statement 1 implies statement 2; i.e., we suppose that $\D{}{p}{q} \geq \D{}{p'}{q'}$ and prove the existence of a classical channel $\Lambda$, probability distributions $p'_\varepsilon$ and $r$, and a joint distribution $t'_\varepsilon$ that satisfy the conditions of statement 2.

As a first check, we can determine by means of an efficient algorithm (see, e.g., \cite{Dahl99} or  \cite{Renes_2016}) if the pair $(p,q)$ relatively majorizes the pair $(p',q')$. If this is the case, then the claim trivially holds without catalysis because there exists a classical channel that takes $p$ to $p'$ and $q$ to $q'$. 

In the more general case of interest for this paper, it may not be the case that the pair $(p,q)$ relatively majorizes the pair $(p',q')$. Furthermore, it is not necessarily the case that  $q$ and $q'$  have rational entries. However, the set of probability distributions with rational entries is dense in the set of all distributions. Thus, using Lemma~\ref{lem:ConvertRational}, we can always pick probability distributions $q_d$ and $q'_{d'}$ (defined below) that are arbitrarily close to $q$ and $q'$ respectively. 

Without loss of generality, we pick $q_d = \left( \frac{d_1}{N}, \ldots, \frac{d_k}{N}\right)$ and  $q'_{d'} = \left( \frac{d'_1}{N}, \ldots, \frac{d'_k}{N}\right)$ for integers $\{d_i\}$ and $\{d'_i\}$, such that $\sum\limits_{i=1}^k d_i = \sum\limits_{i=1}^k d'_i = N$, respectively, with $N$ sufficiently large as needed for the application of Lemma~\ref{lem:ConvertRational}.

Applying Lemma~\ref{lem:ConvertRational}, we conclude that there exist classical channels $E$ and $E'$ such that $E(q) = q_d$ and $E'(q') = q'_{d'}$,
\begin{align}
    \label{T2-DefE}
	\frac{1}{2}\left \Vert q_d - q \right\Vert_1 & \leq \varepsilon_1,  \nonumber \\
	 \frac{1}{2}\left \Vert E(p) - p \right\Vert_1 & \leq O(\sqrt{\varepsilon_1}),
\end{align}
for all probability distributions $p$, and 
\begin{align}
    \label{T2-DefE'}
	\frac{1}{2}\left\Vert q'_{d'} - q' \right\Vert_1 & \leq \varepsilon_2, \nonumber \\
	\frac{1}{2}\left\Vert E'(p') - p' \right\Vert_1 & \leq O(\sqrt{\varepsilon_2}),
\end{align}
for all probability distributions $p'$.

Define the reversal channel of $E'$ as $E'^{*}$ (see Definition~\ref{rmrk:Reversal}). Then
\begin{equation}
	E'^{*}(q'_{d'}) = q'.
\end{equation}
and applying Lemma~\ref{lem:ConvertRational}, it follows that
\begin{equation}
	\frac{1}{2}\left\Vert E'^{*}(p') - p'\right\Vert_1 \leq O(\sqrt{\varepsilon_2}),
	\label{eq:reversal-ch-final-proof}
\end{equation}
for all probability distributions $p'$.
Now define 
\begin{equation}
    \label{T2-Defp''}
    p'' = (1-\delta)p' + \delta q'
\end{equation}
for $0<\delta<1$. Note that 
\begin{equation}
\frac{1}{2} \Vert p'' - p' \Vert_1 \leq \delta   
\label{eq:delta-ineq-p-double-p-single}
\end{equation}
 and $p''$ is full rank. From Lemma \ref{lem:MixRenyi} and statement 1, we conclude that
\begin{equation}
    \label{T2-RelEntp''}
    \D{}{p}{q} > \D{}{p''}{q'}.
\end{equation}

Lemma~\ref{lem:RenyiContinuity} states that $\D{}{p}{q}$ is continuous in both arguments $p$ and $q$, when $q$ is full rank. Thus it follows  that 
\begin{align}
    \label{T2-channelE}
    |\D{}{p}{q} - \D{}{E(p)}{q_d}| \leq f_1(\varepsilon_1),
\end{align}
for some function $f_1(\varepsilon_1)$ with the property that $\lim_{\varepsilon_1 \to 0} f_1(\varepsilon_1) =0 $.
Similarly, it follows that
\begin{align}
    \label{T2-channelE'}
    |\D{}{p''}{q'} - \D{}{E'(p'')}{q'_{d'}}| \leq f_2(\varepsilon_2),
\end{align}
for some function $f_2(\varepsilon_1)$ with the property that $\lim_{\varepsilon_1 \to 0} f_2(\varepsilon_1) =0 $.
Thus, using \eqref{T2-RelEntp''}, \eqref{T2-channelE}, and \eqref{T2-channelE'}, and taking $\varepsilon_1$ and $\varepsilon_2$ sufficiently small by taking $N$ sufficiently large, so that
\begin{equation}
    \D{}{p}{q} - \D{}{p''}{q'} > f(\varepsilon_1) + f(\varepsilon_2),
\end{equation}
we conclude that
\begin{equation}
\label{T2-Step2}
    \D{}{E(p)}{q_d} > \D{}{E'(p'')}{q'_{d'}}.
\end{equation}

Then using Lemma \ref{lem:PreserveRenyi} and \eqref{T2-Step2}, we conclude that
\begin{equation}
	\D{}{\Gamma_d(E(p))}{\eta_N} > \D{}{\Gamma_{d'}(E'(p''))}{\eta_N}.
\end{equation}
Define
\begin{align}
\tilde{E}(p) & \equiv \Gamma_d(E(p)), \\  \tilde{E}'(p'') & \equiv \Gamma_{d'}(E'(p'')). 
\end{align}
Thus,
\begin{equation}
\label{T2-RenDevUni}
    \D{}{\tilde{E}(p)}{\eta_N} > \D{}{\tilde{E}'(p'')}{\eta_N}.
\end{equation}

Applying \eqref{T2-RenDevUni} and \eqref{eq:renyi-div-to-unif-relate-ent}, we conclude that
\begin{equation}
\label{T2-RenEnt1}
    H(\tilde{E}(p)) < H(\tilde{E}'(p'')).
\end{equation}
Since $p''$ is full rank, it follows from Remark \ref{rmrk:FullRank} that $E'(p'')$ is full rank also, so that $\tilde{E}'(p'')$ is full rank as well. From the definition of $H_0$ (see Subsection \ref{subsec:RE}), it follows that
\begin{equation}
\label{T2-RenEnt0}
    H_0(\tilde{E}(p)) \leq H_0(\tilde{E}'(p'')).
\end{equation}
Due to fact that \eqref{T2-RenEnt1} holds, we can also conclude that
\begin{equation}
\tilde{E}(p)^\downarrow \neq \tilde{E}'(p'')^\downarrow. \label{eq:differing-spectra-final-final}    
\end{equation}
 For if we had $\tilde{E}(p)^\downarrow = \tilde{E}'(p'')^\downarrow$, then necessarily $H(\tilde{E}(p)) = H(\tilde{E}'(p''))$.

Since \eqref{T2-RenEnt1}, \eqref{T2-RenEnt0}, and \eqref{eq:differing-spectra-final-final} hold, we can invoke Lemma~\ref{lem:Muller} to conclude that there exist a probability distribution $r$ and a joint distribution  $v''$ with marginals $\tilde{E}'(p'')$ and $r$ such that
\begin{equation}
\begin{gathered}
	\tilde{E}(p) \otimes r \succ v'',  \\
	\D{}{v''}{\tilde{E}'(p'') \otimes r} < \gamma,
\end{gathered}
\label{T2-MajStep}
\end{equation}
for all $\gamma > 0$. Thus, there exists a doubly stochastic channel $\Phi$ (see Lemma \ref{lem:mjrz}) such that
\begin{equation}
	\Phi(\tilde{E}(p) \otimes r) = v''.
\end{equation}

Applying Lemma~\ref{lem:SplitChannel}, it follows that $r$ can be considered, without loss of generality, to be of full rank. Concretely, if $r$ is not full rank, then define $u = \tilde{E}(p) \otimes r$, $u' = v''$ and $w = \eta_N \otimes \eta$. Using Lemma~\ref{lem:SplitChannel}, $\Phi$ can be split into $\Phi_1 \oplus \Phi_2$. If $r$ is of full rank, then $\Phi = \Phi_1$.

Finally, we define the following channel:
\begin{equation}
	\Lambda = (E'^{*} \otimes \operatorname{id}) \circ (\Gamma_{d'}^* \otimes \operatorname{id}) \circ \Phi_1 \circ (\Gamma_d \otimes \operatorname{id}) \circ (E \otimes \operatorname{id}).
\end{equation}
Since this is a composition of classical channels, the overall map is a classical channel.\\

\textbf{Proving 2a) and 2c)}
\begin{align} 
	&\!\!\! \Lambda(p \otimes r) \nonumber \\
	&= [(E'^{*} \otimes \operatorname{id}) \circ (\Gamma_{d'}^* \otimes \operatorname{id}) \circ \Phi_1 \nonumber \\
	& \qquad \circ (\Gamma_d \otimes \operatorname{id}) \circ (E \otimes \operatorname{id})](p \otimes r), \nonumber \\
	&= [(E'^{*} \otimes \operatorname{id}) \circ (\Gamma_{d'}^* \otimes \operatorname{id}) \circ \Phi_1 \circ (\Gamma_d \otimes \operatorname{id})](E(p) \otimes r), \nonumber \\
	&= [(E'^{*} \otimes \operatorname{id}) \circ (\Gamma_{d'}^* \otimes \operatorname{id}) \circ \Phi_1](\tilde{E}(p) \otimes r), \nonumber \\
	&= [(E'^{*} \otimes \operatorname{id}) \circ (\Gamma_{d'}^* \otimes \operatorname{id})]v''
\end{align}

We now define ${t'_\varepsilon = [(E'^{*} \otimes \operatorname{id}) \circ (\Gamma_{d'}^* \otimes \operatorname{id})]v''}$. Observe that the marginals of $t'_\varepsilon$ are $p'_\varepsilon = E'^{*}(\Gamma_{d'}^*(\tilde{E}'(p'')))$ and $r$. 

Now consider the following chain of inequalities:
\begin{align}
    &\!\!\! \frac{1}{2}\left\Vert p'_\varepsilon - p' \right\Vert_1 \nonumber \\
    &= \frac{1}{2}\left\Vert E'^{*}(\Gamma_{d'}^*(\tilde{E}'(p''))) - p' \right \Vert_1 \nonumber \\
	\begin{split}
		&\leq \frac{1}{2}\left\Vert E'^{*}(\Gamma_{d'}^*(\tilde{E}'(p''))) - \Gamma_{d'}^*(\tilde{E}'(p'')) \right\Vert_1 \\
		&\qquad\qquad + \frac{1}{2}\left\Vert \Gamma_{d'}^*(\tilde{E}'(p'')) -p' \right\Vert_1
	\end{split} \nonumber 
	\\
	&\leq O(\sqrt{\varepsilon_2}) + \frac{1}{2}\left\Vert \Gamma_{d'}^*(\Gamma_{d'}(E'(p''))) - p' \right\Vert_1	 \nonumber \\
	&= O(\sqrt{\varepsilon_2}) + \frac{1}{2}\left\Vert E'(p'') - p' \right\Vert_1 \nonumber \\
	&\leq O(\sqrt{\varepsilon_2}) + \frac{1}{2}\left\Vert E'(p'') - p'' \right\Vert_1 + \frac{1}{2}\left\Vert p'' - p' \right\Vert_1 \nonumber \\
	&\leq O(\sqrt{\varepsilon_2}) + O(\sqrt{\varepsilon_1}) +  \delta ,
	\label{eq:final-final-last-line}
\end{align}
where the first inequality follows from the triangle inequality, the second inequality follows from \eqref{eq:reversal-ch-final-proof}, the second equality follows from \eqref{eq:left-inv-ch-embed}, the third inequality follows from the triangle inequality, and the last inequality follows from \eqref{T2-DefE'} and~\eqref{eq:delta-ineq-p-double-p-single}.

Since $\varepsilon_1$, $\varepsilon_2$ and $\delta$ can be made arbitrarily small, we can set the right-hand side of \eqref{eq:final-final-last-line} to $\varepsilon$. Thus $\Lambda(p \otimes r) = t'_\varepsilon$ with marginals $p'_\varepsilon$ and $r$, and the following inequality holds $\frac{1}{2}\left\Vert p'_\varepsilon - p' \right\Vert_1 \leq \varepsilon$.\\

\textbf{Proving 2b)} \\

We need to show that $\Lambda(q \otimes \eta) = q' \otimes \eta$. Then, 
\begin{align}
	& \Lambda(q \otimes \eta) \nonumber \\
	&= [(E'^{*} \otimes \operatorname{id}) \circ (\Gamma_{d'}^* \otimes \operatorname{id}) \circ \Phi_1 \nonumber \\
	& \qquad \circ (\Gamma_d \otimes \operatorname{id}) \circ (E \otimes \operatorname{id})](q \otimes \eta), \nonumber \\
	&= [(E'^{*} \otimes \operatorname{id}) \circ (\Gamma_{d'}^* \otimes \operatorname{id}) \circ \Phi_1 \circ (\Gamma_d \otimes \operatorname{id})](q_d \otimes \eta), \nonumber \\
	&= [(E'^{*} \otimes \operatorname{id}) \circ (\Gamma_{d'}^* \otimes \operatorname{id}) \circ \Phi_1](\eta_N \otimes \eta), \nonumber \\
	&= [(E'^{*} \otimes \operatorname{id}) \circ (\Gamma_{d'}^* \otimes \operatorname{id})](\eta_N \otimes \eta), \nonumber \\
	&= [(E'^{*} \otimes \operatorname{id})](q'_{d'} \otimes \eta), \nonumber \\
	&= q' \otimes \eta,
\end{align}
where the third equality follows because $\Phi_1$ is a doubly stochastic channel. Thus $\Lambda(q\otimes \eta) = q' \otimes \eta$.\\

\textbf{Proving 2d)}
\begin{align}
	\gamma &> \D{}{v''}{\tilde{E}''(p') \otimes r}, \nonumber \\
	&\geq \D{}{t'_\varepsilon}{p'_\varepsilon \otimes r},
\end{align}
where the first inequality follows from \eqref{T2-MajStep} and the second inequality follows from the data-processing inequality \eqref{RenDivDatProc} using the channel $(E'^{*} \otimes \operatorname{id}) \circ (\Gamma^*_{d'} \otimes \operatorname{id})$. Thus, $\D{}{t'_\varepsilon}{p'_\varepsilon \otimes r} \leq \gamma$.
This completes the proof that statement 1 implies statement 2.

We now look at the reverse direction; i.e., fix $\varepsilon \in (0,1)$ and $\gamma > 0 $ and suppose that there are probability distributions $r$  and $p_{\varepsilon}'$, a joint distribution $t_{\varepsilon}'$,  and   a classical channel $\Lambda$ satisfying the stated conditions. Then consider that
\begin{align}
	\D{}{p}{q} + \D{}{r}{\eta} &= \D{}{p \otimes r}{q \otimes \eta}, \nonumber \\
	&\geq \D{}{\Lambda(p \otimes r)}{\Lambda(q \otimes \eta)}, \nonumber \\
	&= \D{}{t'_\varepsilon}{q' \otimes \eta}, \nonumber \\
	&\geq \D{}{p'_\varepsilon}{q'} + \D{}{r}{\eta},
\end{align}
where the first equality follows from additivity of relative entropy \eqref{RenDivAdd}, the first inequality follows from the data-processing inequality \eqref{RenDivDatProc}, and the last inequality follows from Lemma~\ref{lem:SuperS1}.
Since $\eta$ is full rank, $\D{}{r}{s}$ is finite and can be subtracted, so that the following inequality holds
\begin{equation}
	\D{}{p}{q} \geq \D{}{p'_\varepsilon}{q'}.
\end{equation}
Since this inequality holds for all $\varepsilon \in (0,1)$, we can apply 
 Lemma \ref{lem:RenyiContinuity} and take the limit as $\varepsilon \to 0$ to conclude that
\begin{equation}
	\D{}{p}{q} \geq \D{}{p'}{q'}.
\end{equation}
This concludes the proof.
\end{proof}

\section{Proof of Corollary 1}
\label{sec:proofCor1}

\begin{proof} 
Simply pick $q = q' = \eta$ and apply Theorem \ref{Theorem2}. Since $\Lambda$ preserves the uniform distribution, it is a unital classical channel.
\end{proof}

\end{document}